\newtheorem{theorem}{Theorem}[section]
\newtheorem{prop}[theorem]{Proposition}
\newtheorem{assumption}[theorem]{Assumption}
\newtheorem{example}[theorem]{Example}
\def\S{{\cal S}}
\def\E{{\mathbb E}}
\def\V{{\cal V}}
\def\N{{\mathbb N}}
\def\A{{\mathcal{A}}}
\def\B{{\mathcal{B}}}
\def\G{{\mathcal{G}}}
\def\U{{\mathcal{U}}}
\def\sA{\mbox{\textsc{a}}}
\def\sB{\mbox{\textsc{b}}}
\def\sU{\mbox{\textsc{u}}}
\def\sS{\mbox{\textsc{s}}}
\def\sRX{\mbox{\textsc{rx}}}
\def\RX{\scriptsize{\mbox{\textsc{rx}}}}
\def\sTX{\mbox{\textsc{tx}}}
\def\TX{\scriptsize{\mbox{\textsc{tx}}}}
\def\V{{\mathcal V}}
\newcommand\cro[1]{\left\langle #1 \right\rangle}
\def\({{\Bigl(}}
\def\){{\Bigr)}}
\newcommand\esp[1]{{\mathbb E}\left\{#1\right\}}
\def\ind{{\mathchoice {\rm 1\mskip-4mu l} {\rm 1\mskip-4mu l}
{\rm 1\mskip-4.5mu l} {\rm 1\mskip-5mu l}}}
\def\square{\ifmmode\sqr\else{$\sqr$}\fi}
\def\sqr{\vcenter{
         \hrule height.1mm
         \hbox{\vrule width.1mm height2.2mm\kern2.18mm\vrule width.1mm}
         \hrule height.1mm}}                  
\title{Estimating the Spatial Reuse with Configuration Models}
\author{ P. Bermolen, M. Jonckheere, F. Larroca, P. Moyal}
\begin{document}

\begin{abstract}
We propose a new methodology to estimate the
spatial reuse of CSMA-like scheduling.
Instead of focusing on spatial 
configurations of users, we model the interferences
between users as a random graph. 
Using configuration models for random graphs, we show
how the properties of the medium access mechanism are captured by some deterministic differential equations, when the size of the graph gets large.
Performance indicators such as the probability of connection
of a given node can then be efficiently computed from these equations.
We also perform simulations to illustrate the results on different types of random graphs.
 Even on spatial structures, these estimates
get very accurate as soon as the variance of the interference 
is not negligible.
\end{abstract}

\maketitle

\section{Introduction}

Wireless communications are becoming ubiquitous. Nowadays, virtually every electronic device includes at least one wireless interface. With the massive expansion of Wireless Personal Area Networks (WPAN) and Wireless Sensor Networks (WSN), and the advent of the Internet of Things (IoT), this trend will only tend to augment. A recent study performed by Cisco in 2011~\cite{cisco2011} estimated in 1.84 the number of devices connected to the internet per person in the world (if only people that are actually connected to the internet are considered, this number rises to 6.25), and that it would be 6.58 by 2020. 
These modern networks,  decentralized and immense in size bring along several challenges 
regarding performance evaluation.

In the present article, we consider a large set of nodes who communicate with each other by means of a wireless channel. In this network, a Medium Access Control (MAC) mechanism based on 802.11's Distributed Coordination Function (DCF)~\cite{ieee80211} is in place to allow nodes to effectively share the medium. Since every node may be either receiver or transmitter indistinctly, the hidden node
 problem will probably degrade the network's performance if unattended~\cite{larroca201480211}. The typical way of dealing with this problem is to use the RTS/CTS four-way handshake. 

With this mechanism in place, every node that intends to send a packet to a tagged destination first senses the medium, and if idle, it sends a Ready To Send (RTS) frame addressed to it. This packet contains the duration of the impending transmission, so every node overhearing the packet will refrain themselves from transmitting during this period. If possible, the destination node will in turn respond with a Clear To Send (CTS) frame, which also contains the duration, and thus every other node overhearing this frame will also restrain themselves from transmitting. However, if the tagged destination has previously overheard another CTS frame, or its channel is currently not idle, it will naturally not answer the RTS, and the transmission will not take place. 


We are interested in the number of concurrent \emph{successful} transmissions that take place in such network, which is generally coined as the \emph{spatial reuse}. In particular, we will consider a slotted variant of DCF. That is to say, time is broken into slots of duration $T_s$, which in turn are separated in two periods: the so-called contention period ($T_c$) and the transmission period ($T_t$). During the first one, all nodes that have a packet ready to be sent choose a random time between $0$ and $T_c$, when they send an RTS frame to the destination node. Naturally, this will happen unless they sensed the medium as busy or they overheard another RTS or CTS frame first. The destination node will in turn answer with a CTS frame unless its medium is currently busy, or if it received another RTS or CTS frame first. After the RTS/CTS handshake successfully took place, the data packet is sent immediately. 

Before stating our contributions in more details, we mention some of the most relevant research directions in the large literature on performance evaluation of wireless networks driven by access control mechanisms.
The dynamics inferred by randomized decentralized medium access protocols have received a tremendous amount of attention, due on the one hand to their dominant deployment, and on the other hand to the great difficulty of their modeling and performance evaluation.
Even the simplest algorithms where users do not use any information on the contents of their own
or their neighbors buffers to access the channel are far from well understood.
A realistic model should indeed combine at least the two following interacting sources of randomness:
\begin{enumerate}
\item Randomness of the media: stochastic spatial positions and interference between users
competing for communication,
\item Randomness of traffic: stochastic arrivals and departures.
\end{enumerate}
Moreover, users can be considered saturated (they always have packets to transmit)
or unsaturated (i.e., exogeneous sources of traffic feed the network), while traffic can be asymmetric and subject to priorities mechanisms (tree-algorithms).
These difficulties gave rise to different types of models focusing on specific aspects and source of randomness:
\begin{itemize}
\item 
A first class of models does not take into account the spatial diversity and
considers scenarios where every user symmetrically interacts with everyone else.
 Some authors use the term full interference to describe this situation which might be a very pessimistic assumption. (In that context for instance, the non-adaptive ALOHA can be showed to be unstable for almost any traffic parameter \cite{bremaud}.) 
Some models may however focus on possible asymmetric and dynamical stochastic traffic characteristics.
This line of research started with the seminal work of Bianchi \cite{bianchi} 

\item  
Another family of models focuses on a fixed graph of interference with unsaturated users.
In this context, the first benchmark of performance is to characterize
the stability of the network which might turn out to be a difficult task. Under Markovian assumptions, a characterization of the stability problems has been obtained in \cite{Szpankowski83,stabST}
while it was shown in \cite{bordenave12} how to approximate the stability region using mean-fields arguments.

\item On the other side of the spectrum, some models inspired by stochastic geometry and point processes focuses on the random spatial location of users and aim at estimating e.g. the probability of connections for a given users' configuration. Since a limited spatial interaction is an important feature of wireless networks, it is crucial to grasp its quantitative influence on performance.
   To further take into account the stochastic nature of traffic, a time scale separation assumption might then be called upon to use  the probability of access in a given state of the system
 as the speed of service of a higher time scale stochastic network. 
\end{itemize}

\

This last point of view is the one we adopt here.
 The evaluation of connection probabilities for a given scenario remains however a difficult task. This is linked to the analysis of the dynamics of the so-called parking process, which has received a tremendous amount of attention in the physics and biology literature (under the name of random adsorption models) \cite{penrose,penrose_approximations}.
Even in the case of completely symmetric users, evaluating the probability of connection when users are spatially located as a Poisson point process in the plane is a difficult problem.
Recently, a characterization of the Laplace transforms of functionals of the parking process dynamics
(also called the Mattern-$\infty$ process \cite{BaccelliB09a,BaccelliB09b}) were obtained \cite{VietB12,BaccelliViet}.
These are striking results given the complexity of the process.
Unfortunately they do not provide explicit formula for the connection probabilities for instance and bounds have to be invoked to reach easily computable estimations.

\

We take here an alternative route.
Building on some recent results for parking processes on random graphs,
we base our estimation of connection probabilities on dynamics over random graphs rather than on thinned point processes.
Just as models inspired by stochastic geometry, we consider a single time slot but 
we forget the spatial configurations of users to focus on a random graph of interference. Though this method might loose in accuracy by ignoring some correlations between users interferences
it allows to get simple differential equations through which one can easily compute the probability of connection.

More specifically, our contributions are the following. Firstly, we provide a way to calculate the spatial reuse in a large arbitrarily random network (see Sec. \ref{sec:preliminaries} for a precise description of this randomness). This result is obtained by using and extending our previous work~\cite{aap} which, as discussed in Sec.\ \ref{sec:fluidlimit}, may be regarded as an analysis where the receiver node is not considered. As a first step, we provide an intuitive interpretation of the main result in~\cite{aap}. This later allows us to present several extensions that take into account the receiver and the RTS/CTS handshake. In particular, in Sec. \ref{sec:extensions} we will show how to analyze variants of the access mechanism, which differ in how the transmitter (and the overhearing nodes) react to a failed handshake. Moreover, the usefulness of the presented results is illustrated by studying several example scenarios (Sec. \ref{sec:examples}).


\section{Context and assumptions}\label{sec:preliminaries}


We assume in the sequel a threshold-type channel, where transmissions are either perfectly received or not received at all. This is sometimes termed \emph{protocol model} in the literature \cite{}. In other words, we consider an interference graph $\G=(\V,\mathcal{E})$, where an edge exists from node $s$ to node $r$ if the transmission of node $s$ is received by node $r$. We further assume that the channel is symmetric, meaning that all edges are bidirectional. Practically speaking, this means that all wireless nodes have relatively similar hardware and transmit at the same power. The last simplification is that the RTS/CTS handshake takes place instantly. Again, practically speaking, this means that thanks to the collision avoidance scheduling, no collisions will occur in our model. 

Let us now discuss how to model the communication graph. In a planned network, where all nodes are fixed and the propagation conditions are stable, the graph is given and does not change significantly over time. This could be the case in a wireless mesh with line-of-sight between nodes. However, if nodes are mobile and/or the channel rapidly changes (for instance, a MANET or a urban scenario), the graph may vary significantly, thus adding a second level of randomness (that is, in addition to the MAC layer). We do focus on the second scenario. 

Consider for instance the ``most random'' graph possible: a scenario where every pair of nodes are neighbors (i.e.\ an edge exists between them) with probability $p$, and the event of two nodes being neighbors is independent of everything else. This graph models a totally unplanned and dynamic network, where all we may be able to estimate a priori is $p$. In particular, let us note as $N$ the total number of nodes in the network, and as $\nu$ the average number of neighbors of each node. Then, $p(N-1)=\nu$, and if $N$ is big, this amounts to $p = p_N = \nu/N$. Thus, the parameter of interest in this scenario is $\nu$. 

We may however have more information on the communication graph than the mean number of neighbors. In particular, we will assume here that the graph of interference is characterized
by the complete distribution of degrees denoted $h_N$.
For example, we may know that half the nodes have two neighbors, and the other half three, in which case the counting measures of degrees is such that $h_N(2)=h_N(3)=1/2$, and 0 for the rest. 
For the example described in the previous paragraph the degree distribution is $h_N(i)=C_i^{N-1}p^i(1-p)^{N-1-i}$ (i.e. a binomial distribution), which may be approximated by a Poisson distribution with parameter $\nu$ when $N$ is large. 

Regarding traffic, we assume that all nodes are saturated i.e. have a packet ready to be sent in every time-slot. We further assume that the destination node for this packet is a neighbor picked at random. Let us consider a given contention period. At time 0, every node will choose a random number, uniformly distributed between 0 and $T_c$. Consider the node with the minimum such time. It will send a RTS frame to one of its neighbors, chosen randomly among them. Since this is the first transmission, the destination node will answer with a CTS frame, thus ``blocking'' all its neighbors. The origin node will immediately start transmitting a data frame, also blocking all its neighbors. In what follows, we will term these two nodes as \emph{active}, and their neighbors as \emph{blocked}. Just like the receiving node, blocked nodes stop competing for the channel. Let us term the rest of the nodes as \emph{unexplored}. 

We now have to focus solely on this last set, and find the node who has drawn the minimum number. This node, say $i$, will then send a RTS frame to any of its neighbors, say $j$. However, the chosen node may already be blocked by a previous transmission and/or CTS, in which case the handshake will fail.  As a first step in the analysis, and in the following two sections, we will assume that the origin node realizes this failure (since no CTS is received), and immediately sends a new RTS frame to another random neighbor. This is repeated until a CTS is received back, or no more neighbors are left (i.e.\ all its neighbors belong to the blocked set). Furthermore, these potentially several transmissions of RTS frames will also be assumed to take place instantly. In addition to allowing us to illustrate the techniques we will use in the analysis, this scenario has a value of its own, since it models an ideal protocol and may be regarded as an upper bound to the resulting spatial r
 euse. 

Finally, the above procedure is repeated until time $T_c$, or equivalently, until no more unexplored nodes are left. 

\section{Preliminaries}
\subsection{Random sequential adsorption}\label{subsec:naive}

We now start the analysis with the objective of estimating the number of successful transmissions that take place concurrently. That is to say, how many CTS frames are sent in a single time slot in average. The first step will be to construct the interference graph $\G=(\V,\mathcal E)$. Since our only a priori information is the nodes' degree distribution $h_N(i)$, we will uniformly choose a graph among those that comply with it. 

Having chosen a graph, we proceed to analyze the MAC protocol. With the description of the access mechanism we presented in the previous section in mind, let us note as $\{T_i\}_{i=1,\ldots,N}$ the times chosen by the nodes to send its RTS frame. These are drawn uniformly from the interval $[0,T_c]$. However, regarding the order at which they will proceed, which ultimately dictates the successful transmissions that will take place, any other continuous distribution is equivalent. In particular, the exponential distribution with parameter $\lambda$ is one that will serve our purposes. Let us now note as $\U_t$, $\A_t$ and $\B_t$ the set of unexplored, active and blocked nodes at time $t$ respectively (with $\U_t \cup \A_t \cup \B_t = \{1,\ldots,N\}\,\forall t$). At time 0 we have $\A_0=\B_0=\emptyset$ and $\U_0=\{1,\ldots,N\}$. 

The first transmission attempt will happen at the minimum of all the $N$ competing exponentials. Let $t_1$ be such time. The distribution of this random variable will thus be exponential with parameter $N\lambda$, and the transmitting node will be uniformly chosen from $\U_t$. Let $s$ be this node. The first step then is to update the sets as follows: $\A_{t_1^+}\leftarrow \A_{t_1}\cup \{s\}$ and $\U_{t_1^+}\leftarrow \U_{t_1}\setminus \{s\}$. 

We now have to choose a random neighbor of $s$, and check whether it will be able to answer with a CTS. In our present ideal scenario, this will be repeated until either a CTS is received, or no more neighbors are left. This amounts simply to checking whether $s$ currently has any unexplored neighbors, and if so, choosing one randomly. Therefore, from the neighbors of $s$ (the set $\mathcal{N}_{s}$ obtained from $\G$), we choose a random node from $\mathcal{N}_{s}\cap\U_{t_1^+}$ (if any), which we will denote as $r$. The following two cases are then possible:
\begin{enumerate}
    \item If no such $r$ exists (i.e.\ $\mathcal{N}_s\cap\U_{t_1^+}=\emptyset$) we proceed to the next node. 
    \item If a $r$ in $\U_{t_1^+}$ was found, we re-update the set of active nodes as $\A_{t_1^{+}}\leftarrow \A_{t_1^+}\cup \{r\}$. The set of neighbours of $r$ will be noted as $\mathcal{N}_{r}$, and it will naturally include $s$. We thus finally update the blocked set of nodes as $\B_{t_1^+}\leftarrow \B_{t_1}\cup (\mathcal{N}_{s}\setminus \{r\})\cup (\mathcal{N}_{r}\setminus \{s\})$, the unexplored one as $\U_{t_1^+}\leftarrow \U_{t_1^+}\setminus \left(\mathcal{N}_{s}\cup\mathcal{N}_{r}\right)$, and proceed to the next node. Figure \ref{fig:ex_vecinos} illustrates this case for a generic time $t_n$.
\end{enumerate}

 \begin{figure}
     \centering
     \includegraphics[scale=0.5]{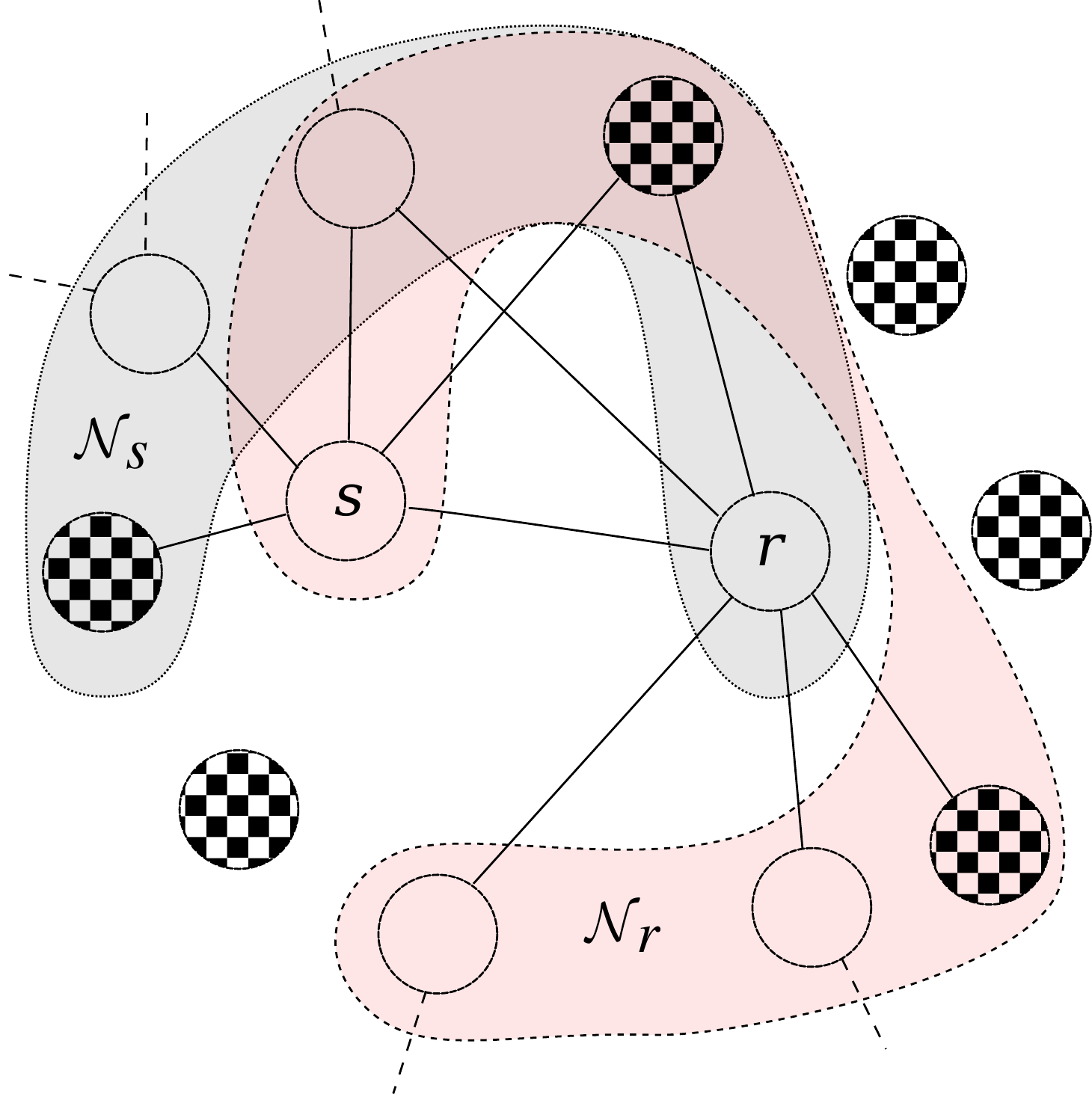}
     \caption{An example of $\mathcal{N}_{s}$ and $\mathcal{N}_{r}$ when a $r$ in $\U_{t_n^+}$ exists. Nodes with no fill were unexplored at time $t_n$. Only the connections of $s$ and $r$ are shown. }\label{fig:ex_vecinos}
 \end{figure}

The next transmission attempt (if any) will happen at time $t_2$. Since only unexplored nodes compete for the channel, the random variable $t_2-t_1$ follows an exponential distribution with parameter $|\U_{t_1^+}|\lambda=|\U_{t_2}|\lambda$. The procedure described above is still valid for $t_2$ (and for every other $t_n$). It is important to highlight that, since the tagged node is in the set of unexplored nodes, it is not a neighbor of any active node, and thus all its neighbors will necessarily belong to the set $\U_{t_2^+}\cup\B_{t_2}$ (i.e.\ excluding $\A_{t_2^+}$). The above process will be repeated until no more unexplored nodes are left: let $t_{n^*}$ be this time (i.e.\ $\U_{t_{n^*}}=\emptyset$). 

The algorithm described above belongs to the class of the so-called \emph{Parking Processes}~\cite{penrose}. In this context, the set $\A_{t_{n^*}}$ is termed \emph{Jamming limit}.

Please note that, given $\G$, the triplet $(\U_t,\A_t,\B_t)$ forms a non-homogeneous continuous time Markov chain. Moreover, any state in which $\U_t=\emptyset$ is an absorbing one, and the total number of transitions that it takes to reach any such state is the number of transmissions attempts at the given contention period (i.e.\ the number of RTSs sent). Moreover, the cardinality of $\A_t$ when the chain reaches this state is the number of RTSs and CTSs sent. We have thus proved the following relation: 
\begin{gather*}
    \E\{\Theta\}=\theta=\underset{t\rightarrow \infty}{\lim}\E\left\{\frac{\left|\A_t\right|}{N}\right\}-\E\left\{\frac{n^*}{N}\right\}, \label{eq:exp_Theta}
\end{gather*}
where $\Theta$ is a random variable indicating {\bf the number of successful transmissions} obtained in a given contention period, divided by the total amount of nodes ($N$), and $n^*$ is a random variable indicating {\bf the number of transitions} it took the chain to reach an absorbing state. 

One may try to directly define a Markov chain on a fixed graph and
 calculate the two terms that appear in \eqref{eq:exp_Theta} given a particular communication graph $\G$, and then combine these values weighting them by the probability of every particular graph. Unfortunately, this brute-force approach will quickly be limited by the immense 
size of the state space of the Markov chain.
This explains the need for an alternative, more tractable, approach.



\subsection{Configuration algorithm and measure-valued Markov process for large graphs
}\label{subsec:conf_algo}

Given the difficulties emphasized above, our approach consists in constructing
 the random graph $\G$ jointly with the transmissions' dynamics, following the so-called configuration model \cite{bollobas01,molloyreed,durrett}. In a first instance, this construction focuses only in the unexplored nodes. In fact, we are interested in how many CTSs were sent when the unexplored set $\U_t$ becomes empty. 

To this end, let us recall that at any time $t$, a node can be an element of either $\U_t$, $\A_t$ or $\B_t$. Moreover, our only a priori information is the  counting measure $\mu$, where $\mu(i)$ is the nu;ber of unexplored nodes having $i$ unexplored neighbors at time $0$. We then extend this to an arbitrary time $t>0$ and denote it as $\mu_t$. 

It is important to highlight that we do {\bf not} require the knowledge of which particular nodes are the neighbors of any given unexplored node. As such, we regard the edges starting from any node in $\U_t$ as \emph{unmatched} half-edges from an unexplored node towards either another unexplored node (and we denote them as $\sU\to\sU$) or a blocked node (which will naturally be noted as $\sU\to\sB$). In this sense, $\mu_t(i)$ tells us how many unexplored nodes have $i$ half-edges of class $\sU\to\sU$ at time $t$. As we now discuss, by characterizing $\mu_t$ alone we will be able to estimate the spatial reuse. 

Let us denote for all $t \ge 0$ and all $i \in \N$, 
\begin{align}
\label{eq:defalpha}
\alpha_t(i)&=\frac{\mu_t(i)}{\sum_{j\in \N}\mu_t(j)};\\
\label{eq:defbeta}
    \beta_t(i)&=\frac{i\mu_t(i)}{\sum_{j \in \N} j\mu_t(j)}.
\end{align} 
The probability measures $\alpha_t$ et $\beta_t$ have 
intuitive interpretations: the first one describes the degree distributions of a randomly (and uniformly) picked unexplored node at time $t$,
while the second one is the size biased distribution of $\alpha_t$
and represents the degree distribution of any neighbors of a randomly picked unexplored node or in other words, the degree of the starting node 
of an half-edge drawn uniformly at random, among all half-edges starting from unexplored nodes. This representation will be useful 
in the following discussion.   

Suppose that at a transition time $t_n$, a formerly unexplored node transmits a RTS frame, thus becoming active. Let us term this node as $\sTX$. Then, the measure $\mu_{t_n}$ has to be modified according to the following steps: 
\begin{enumerate}
    \item $\sTX$ is chosen uniformly among all unexplored nodes. We shall denote by $K^{\TX}$ {\bf its degree towards other unexplored nodes}. Hence, 
        \begin{gather}
            P(K^{\TX}=i)=\alpha_{t_n}(i).\label{eq:def_alpha}
        \end{gather}
         (Note that we have dropped the dependence on time for $K^{i}$ to lighten notations
         and we shall do the same for other quantities of interest in the sequel.)
         
        Then, as $\sTX$ becomes active and is no longer unexplored, we have to remove it from the measure $\mu_{t_n}$: the quantity $\mu_{t_n}(i)$ is decreased by one in $i=k^{\TX}$, a particular realization of $K^{\TX}$ (for example $k^{\TX}=3$ in Fig.\ \ref{fig:ex_vecinos}). If $k^{\TX}=0$, then this transmission attempt will fail and we proceed to the next node, if any. 

    \item If $k^{\TX}>0$, one of the unexplored neighbors of $\sTX$ also becomes active: 
let $\sRX$ be the intended destination of $\sTX$'s RTS. One of the $k^{\TX}$ half-edges starting from $\sTX$ is matched with an $\sU\to\sU$ half-edge, 
drawn uniformly at random among the $\sum i\mu_{t_n}(i) - k^{\TX}$ available ones, in order to complete a full edge between $\sTX$ and $\sRX$. 

If $k^{\TX}>1$, we also need to block all the unexplored neighbors of $\sTX$ except $\sRX$, and for this we repeat the same random matching procedure 
$k^{\TX}-1$ times. 
  
In total, $k^{\TX}$ $\sU$-nodes change status at that stage, and we have to know their respective degrees to update the measure accordingly. 
Observe that at $t_n$, precisely $i\mu_{t_n}(i)$ half-edges belong to a node that has a degree $i$ towards unexplored nodes. Let for any $i$, $Y^{\TX}(i)$ the number of neighbors of $\sTX$ (including $\sRX$) having degree $i$ toward the unexplored vertices just before $t_n$. In view of the previous observation, conditionally to $K^{\TX}$, $Y^{\TX}(i)$ is an hypergeometric random variable with parameters $K^{\TX}$ and $\beta_{t_n}(i)$ among all the available half-edges at $t_n$. All in all, for any $i$, $\mu_{t_n}(i)$ decreases by the quantity $y^{\TX}(i)$, a realization of $Y^{\TX}(i)$. 

\item Let us denote by $K^{\RX}$, the  {\bf number of unmatched} $\sU\to\sU$ {\bf half-edges starting from} $\sRX$ (for example, $k^{\RX}=4$ in Fig.\ \ref{fig:ex_vecinos}). Set $K^{\RX} \equiv 0$ if $k^{\TX}=0$.  
In view of the previous discussion, conditionally to $\{K^{\TX}>0\}$ the distribution of the r.v. $K^{\RX}$ is given by
        \begin{gather}
            P(K^{\RX}=i) = \beta_{t_n}(i),\,i\in\N^*.\label{eq:def_beta}
        \end{gather}
        
Likewise the previous step, if $k^{\RX}>1$, we now have to block all the unexplored neighbors of $\sRX$, except $\sTX$ and all its (already blocked) neighbors. Let $Y^{\RX}(i)$ be the number of such neighbors of $\sRX$, having degree $i$ toward the unexplored vertices at $t_n$. 
The distribution of the latter r.v. depends on the number of neighbors that $\sTX$ and $\sRX$ have in common. Such neighbors may very well exist, 
as the example in Fig.\ \ref{fig:ex_vecinos} illustrates. But recall that we are interested in a large network, where $N$ goes to infinity. In such case, the probability of $\sRX$ choosing one (or more) of the (finitely many) neighbors of $\sTX$ is arbitrarily small. Therefore, to the limit, $Y^{\RX}(i)$ is an hypergeometric random variable with parameters $k^{\RX}-1$ and $\beta_{t_n}(i)$. As above, for any $i$, $\mu_{t_n}(i)$ additionally decreases by the quantity $y^{\RX}(i)$. 

    \item 
    Once we have blocked all the neighbors of both $\sTX$ and $\sRX$, all their formerly $\sU\to\sU$ half-edges disappear and become either $\sB\to\sA$ (the ones pointing towards either $\sTX$ or $\sRX$) or $\sB\to\sU$ (the rest, see the slashed edges in Fig.\ \ref{fig:ex_vecinos}). This has been performed in the previous step. However, and regarding the latter half-edges, there will be a matching number of $\sU\to\sU$ half-edges from other nodes that now become $\sU\to\sB$, and thus have to be removed from $\mu_{t_n}(i)$. This means that, following the example in Fig.\ \ref{fig:ex_vecinos}, five unexplored nodes (each with a random degree towards unexplored nodes with distribution $\beta_{t_n}(i)$) will ``lose'' one $\sU\to\sU$ half-edge, and the measure $\mu_{t_n}$ should be updated accordingly. We then proceed to the next node, if any. 
   
       Notice that the previous discussion still applies here: each blocked node completes its formerly $\sU\to\sU$ half-edges with any of the available half-edges, uniformly chosen. Since there are infinitely many, {\bf the probability of choosing precisely one belonging to one of the rest of the blocked neighbors is arbitrarily small when the number of nodes is large}. This means that the neighbors of $\sTX$ or $\sRX$ will not be neighbors between them. Furthermore, and for the same reason, each of the five slashed half-edges in Fig.\ \ref{fig:ex_vecinos} will point towards a different node.\footnote{A formal proof of this intuitive analysis may be found in \cite{aap}.}
       
       Hence, in terms of the random vectors defined before, there are 
$Z^{\TX}=\sum_{\ell>0}(\ell-1)\left(Y^{\TX}(\ell)-\ind_{K^{\RX}=\ell}\right)^+$ (respectively, $Z^{\RX}=\sum_{\ell>0}(\ell-1)Y^{\RX}(j)$ whenever $k^{\TX}>0$) unexplored nodes neighboring the blocked neighbors of $\sTX$ (resp., of $\sRX$) that lose one $\sU\to\sU$ half-edge (in the example of Fig.\ \ref{fig:ex_vecinos} $z^{\TX}+z^{\RX}=5$). Let us define for all $i$, $X^{\TX}(i)$ (resp., $X^{\RX}(i)$) the r.v. indicating {\bf how many such nodes have a degree $i$ towards the unexplored nodes}. The latter is hypergeometric with parameters $Z^{\TX}$ (resp., $Z^{\RX}$) and $\beta_{t_n}(i)$. Using these definitions, in this last step, for any $i$, $\mu_{t_n}(i)$ decreases by $x^{\TX}(i)+x^{\RX}(i)$ and increases by $x^{\TX}(i+1)+x^{\RX}(i+1)$. 
\end{enumerate}

%

The above description shows that $\left(\mu_t\right)$ is a measure-valued continuous-time inhomogeneous Markov chain (\cite{dawson}), admitting 
the null measure $\mathbf 0$ as absorbing state.

Let us further add another dimension to the process which counts how many CTS frames have been sent up to time $t$, which we  denote as $c_t$. 
It is easily seen that $(\mu_t,c_t)$ is still a Markov chain. We are thus interested in $c_t$ after the chain reaches an absorbing state, considering that the following equality holds: 
\begin{gather*}
    \theta=\underset{t\rightarrow \infty}{\lim}\E\left\{\frac{c_t}{N}\right\}
\end{gather*}

\section{Large graph limit}\label{sec:fluidlimit}


As we discussed in the previous section, the spatial reuse can be described as a function of the Markov process $(\mu_t,c_t)$.  
Though computations on the defined Markov chain are still a formidable task for large state spaces,
 the system dynamics get simpler to understand, for large $N$. 
In fact,  $c_t$ can be expressed in the limit by means of a differential equation (where naturally $\mu_t$ plays a central role).

Let us emphasize the dependence of the measure-valued process on the size $N$ of the graph by denoting the latter 
$\left(\mu^N_t\right)$. As $N$ goes large, we consider a scaled process $\left(\bar\mu_t^N\right):=\left(\mu^N_t/N\right)$ (thus for all $i$, $\bar\mu_t^N(i)$ refers to the proportion of nodes with $i$ unexplored neighbors at time $t$ for the graph of size $N$). We then identify the so-called large-graph limit $\bar\mu$ by letting $N$ go to infinity, thereby obtaining a functional law of large numbers for the system. This limiting process gives us a representation of the ``mean behaviour'' of the measure valued process $(\mu_t)$, and the spatial reuse as a by-product. 

In what follows, for any $t$, for any fixed measure $\bar\mu_t$ we denote likewise (\ref{eq:defalpha}) and 
(\ref{eq:defbeta}),  
\begin{align}
\label{eq:defbaralpha}
\bar\alpha_t(i)&=\frac{\bar\mu_t(i)}{\sum_{j\in \N}\bar\mu_t(j)}=\frac{\bar\mu_t(i)}{\cro{\bar\mu_t,1}};\\
\label{eq:defbarbeta}
 \bar\beta_t(i)&=\frac{i\bar\mu_t(i)}{\sum_{j \in \N} j\bar\mu_t(j)}=\frac{i\bar\mu_t(i)}{\cro{\bar\mu_t,\chi}}.
\end{align}

\subsection{Comeback to the parking process}
 
Let us re-interpret the convergence results of \cite{aap}, which shall turn out to be useful to derive the corresponding results in the present context. 
Let for any $k \ge 0$, $\chi^k$ denotes the function $x \to x^k$. The suitable initial assumptions under which the convergence results of \cite{aap} (and thereby, the ones presented hereafter) hold true, are given below: 

  

\begin{assumption}
\label{hypo:1}
For all bounded functions $\phi$ from $\N$ to itself,
\begin{equation}
\label{eq:convinit}
\cro{\mu^n,\phi} \underset{n\rightarrow \infty} {\overset{\mathcal{(P)}}{\to}}\cro{\nu,\phi},
\end{equation}
where $\nu$ is a deterministic finite measure on $\mathbb N$ such that for some constants $\alpha>0$ and $M>1$,
\begin{equation}
\label{eq:condinit}
\alpha<\cro{\nu,\chi}\,\mbox{ and }\,\cro{\nu,\chi^6} <M.
\end{equation}
\end{assumption}

\

It is proven in Theorem 3.1 of \cite{aap} that under Assumptions \ref{hypo:1}, the sequence of processes $\left\{\bar\mu^N\right\}$ converges in probability 
and uniformly over compact time intervals to the unique deterministic function $\bar\mu$ satisfying for  
 all $i\in\N$, 
\begin{gather}
    \frac{d}{dt}\bar\mu_t(i) = 
    \begin{cases}
        -\lambda \left[\bar\mu_t(i)+i\bar\mu_t(i) + (i\bar\mu_t(i)-(i+1)\bar\mu_t(i+1))\left(\frac{\sum_j j^2\bar\mu_t(j)}{\sum_j j\bar\mu_t(j)}-1\right)\right]& \text{ if }\sum_j j\bar\mu_t(j)>0;\\
                                                                                                                                -\lambda\bar\mu_t(0)1_{\{i=0\}} & \text{ if }\sum_j j\bar\mu_t(j)=0. 
    \end{cases}
\label{eq:limflu_i}
\end{gather}
As a matter of fact, this set of equations can be interpreted quite easily,  
by relating it to the discussion in Sec.\ \ref{subsec:conf_algo}. In particular, since an access mechanism with no destination node is considered, we can recognize the steps (1), (3) and (4) in Sec.\ \ref{subsec:conf_algo} (with all references to $\sRX$ omitted). Recalling the definitions 
therein, we obtain that for all $i$, 
\begin{align}
    \frac{d}{dt}\bar\mu_t(i) & = 
-\lambda \sum_j\bar\mu_t(j)\left[\bar\alpha_t(i) + \mathbb{E}\{K^{\TX}\}\bar\beta_t(i) + \mathbb{E}\{K^{\TX}\}(\bar\beta_t(i)-\bar\beta_t(i+1))\sum_j(j-1)\bar\beta_t(j)\right].\nonumber\\
   & =  -\lambda \sum_j\bar\mu_t(j)\left[\bar\alpha_t(i) + \mathbb{E}\{Y^{\TX}(i)\} + \mathbb{E}\{X^{\TX}(i)\} - \mathbb{E}\{X^{\TX}(i+1)\}\right].
   \label{eq:transformlimflui}
\end{align}

\subsection{Sender-Receiver algorithm}

We may now turn back our attention to the more realistic scenario where the sender chooses a receiver among its
neighbors which in turn might disable its own neighbors to be able to receive the message.
In this context, we have the following result:


\begin{theorem}
Under Assumptions \ref{hypo:1}, the sequence of processes $\{\bar\mu^N\}$ converges in probability and uniformly on compact time intervals 
towards the only measure-valued deterministic function $\bar \mu$ of the following infinite dimensional differential system: 
for all $t\ge 0$ and all $i\in\N$,
\begin{align}
\bar\mu_0(i)&=\nu(i)\nonumber;\\
{d \over dt} \bar\mu_t(i) &= -\lambda \sum_j\bar\mu_t(j) \Bigg[ \bar\alpha_t(i) + \bar\beta_t(i) 
\Big(\sum_jj\bar\alpha_t(j)+(1-\bar\alpha_t(0))\sum_j(j-1)\bar\beta_t(j)\Big)
\Bigg. \nonumber\\
                          &+ \left.\Big((\bar\beta_t(i)-\bar\beta_t(i+1))\sum_j (j-1)\bar\beta_t(j)\Big)\Big(\sum_jj\bar\alpha_t(j)+(1-\bar\alpha_t(0))\sum_j(j-2)\bar\beta_t(j)\Big)\right]. 
\label{eq:ecdif_ideal}
\end{align}
\end{theorem}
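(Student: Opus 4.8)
The strategy is to establish this result by the same route already used for the destination-free case in Theorem 3.1 of \cite{aap}, namely a functional law of large numbers obtained by identifying the compensator (the drift) of the measure-valued Markov chain $(\mu^N_t)$, showing tightness of the rescaled sequence $\{\bar\mu^N\}$, and characterizing every subsequential limit as the unique solution of \eqref{eq:ecdif_ideal}. The only genuinely new ingredient compared with \cite{aap} is the presence of the receiver node $\sRX$ and the extra blocking it triggers, so the plan is to reduce the theorem to that earlier machinery by carefully recomputing the drift.

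First I would write the generator of $(\mu^N_t)$ acting on test functions of the form $\mu\mapsto\cro{\mu,\phi}$ for bounded $\phi$. Transitions occur at rate $\lambda\sum_j\mu^N_t(j)=\lambda\cro{\mu^N_t,1}$ (the total number of unexplored nodes competing), and at each transition the measure is decremented according to the four-step mechanism of Sec.\ \ref{subsec:conf_algo}. The core computation is to take conditional expectations of the jump of $\cro{\mu^N_t,\phi}$ given the current state, using the distributions recorded there: $K^{\TX}\sim\bar\alpha_t$, $K^{\RX}\sim\bar\beta_t$ conditionally on $\{K^{\TX}>0\}$, and the hypergeometric laws of $Y^{\TX}(i)$, $Y^{\RX}(i)$, $X^{\TX}(i)$, $X^{\RX}(i)$ with the stated parameters. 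The point is that, in the $N\to\infty$ limit, each hypergeometric mean collapses to the product of its sample size and $\bar\beta_t(i)$, and the common-neighbour corrections vanish (by the argument already given in steps (3)--(4)); matching the $\sTX$ contributions reproduces exactly the destination-free drift \eqref{eq:transformlimflui}, while the $\sRX$ contributions add the terms built from $K^{\RX}$, $Y^{\RX}$ and $X^{\RX}$. Assembling these, and taking $\phi=\ind_{\{\cdot=i\}}$, should yield precisely the bracket in \eqref{eq:ecdif_ideal}: the $\bar\alpha_t(i)$ term from removing $\sTX$, the $\bar\beta_t(i)$ term (with its coefficient involving both $\sum_j j\bar\alpha_t(j)=\E\{K^{\TX}\}$ and the receiver contribution weighted by $1-\bar\alpha_t(0)=\P(K^{\TX}>0)$) from the neighbours that change status, and the $(\bar\beta_t(i)-\bar\beta_t(i+1))$ term from the second-order blocking in step (4).

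Having identified the drift, I would then invoke the standard Doob--Meyer decomposition $\cro{\bar\mu^N_t,\phi}=\cro{\bar\mu^N_0,\phi}+\int_0^t G^N_s\,ds+M^{N,\phi}_t$, bound the quadratic variation of the martingale part by $O(1/N)$ using the uniform sixth-moment control in \eqref{eq:condinit} (which guarantees the relevant functionals stay integrable along the dynamics), and deduce that $M^{N,\phi}_t\to0$ in $L^2$ uniformly on compacts. Together with Assumption \ref{hypo:1} for the initial condition and a tightness argument for $\{\bar\mu^N\}$ in the Skorokhod space (again as in \cite{aap}), this forces any limit point to satisfy the integral form of \eqref{eq:ecdif_ideal}. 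The final step is uniqueness of that solution, which follows from a Gronwall estimate once one checks that the right-hand side is Lipschitz in $\bar\mu$ on the relevant set; here the moment bound \eqref{eq:condinit} is what keeps the ratios $\sum_j j^2\bar\mu_t(j)/\sum_j j\bar\mu_t(j)$ appearing in $\bar\beta_t$ controlled.

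\textbf{Main obstacle.}
I expect the hard part to be the drift computation for the receiver, specifically step (4) applied to $\sRX$: one must show that, to leading order in $N$, the sets of neighbours blocked via $\sTX$ and via $\sRX$ are disjoint and that no half-edge is double-counted, so that the $Y^{\RX}$ and $X^{\RX}$ contributions genuinely factor through independent hypergeometric draws. The excerpt flags this as the place where common neighbours of $\sTX$ and $\sRX$ could in principle matter; making the heuristic ``probability arbitrarily small'' rigorous — i.e.\ controlling the $o(1)$ error uniformly in the state so that it does not survive in the limiting generator — is the delicate estimate on which the whole identification of \eqref{eq:ecdif_ideal} rests.
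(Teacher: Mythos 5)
Your proposal follows essentially the same route as the paper: the convergence, tightness and uniqueness machinery is delegated to the argument of \cite{aap}, and the genuinely new content is the identification of the drift by summing the mean decrements contributed by the transmitter (reproducing \eqref{eq:transformlimflui}) and by the receiver (the terms weighted by $P(K^{\mbox{\scriptsize\textsc{tx}}}>0)=1-\bar\alpha_t(0)$, with the hypergeometric means of $Y^{\mbox{\scriptsize\textsc{rx}}}$ and $X^{\mbox{\scriptsize\textsc{rx}}}$ collapsing to products with $\bar\beta_t$), which is exactly how the paper assembles \eqref{eq:ecdif_ideal}. The additional detail you give on the Doob--Meyer decomposition, the $O(1/N)$ quadratic-variation bound and the Gronwall uniqueness argument is a faithful expansion of what the paper compresses into its citation of \cite{aap}, so the two proofs coincide in substance.
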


\begin{proof}
The steps to 
rigorously prove the uniqueness of the solution $\bar\mu$, or the convergence to that solution,
are exactly similar to the ones followed in \cite{aap} and we do not reproduce them here.
We can then extrapolate from the system (\ref{eq:transformlimflui}), that the right-hand of the equation above should be the mean number of nodes with $i$ half-edges of type $\sU\to\sU$ that are removed at time $t$, times the normalized total transition rate. That is to say, for all $i$, 
\begin{multline}
    {d\over dt}\bar\mu_t(i) = -\big(\lambda\sum_j\bar\mu_t(j)\big)\big[\bar\alpha_t(i)+\esp{Y^{\TX}(i)}+\esp{X^{\TX}(i)-X^{\TX}(i+1)} \big.\label{eq:ecdif_ideal_exp}
    \\ + \big.P(K^{\TX}>0)\left(\esp{Y^{\RX}(i)} + \esp{X^{\RX}(i)-X^{\RX}(i+1)}\right)\big].
\end{multline}

%

The first terms between brackets correspond to the transmitter node, exactly as in (\ref{eq:transformlimflui}). If $K^{\TX}>0$, then 
additional nodes of degree $i$ may be blocked by $\sRX$, which correspond to the rest of the terms. 
According to the discussion of Section \ref{subsec:conf_algo}, the mean expectations of the involved random variables are given by 
\begin{align*}
    P(K^{\TX}>0) &= 1-\bar\alpha_t(0);\\
    \E\{Y^{\TX}(i)\} &=\esp{K^{\TX}} \bar\beta_t(i);\\
    \E\{Y^{\RX}(i)\} &=\esp{\left(K^{\RX}-1\right)^+}\bar\beta_t(i);\\
    \mbox{for all }j,\,\E\{X^{\TX}(j)\} &= \sum_{\ell>0}(\ell-1)\esp{\left(Y^{\TX}(\ell)-\ind_{K^{\RX}=\ell}\right)^+}\bar\beta_t(j);\\
    \mbox{for all }j,\,\E\{X^{\RX}(j)\} &= \sum_{\ell>0}(\ell-1)\esp{Y^{\RX}(\ell)}\bar\beta_t(j),
\end{align*}
and \eqref{eq:ecdif_ideal} follows by straightforward algebra. 
\end{proof}

\begin{example}

Before discussing how to calculate the spatial reuse, let us consider an example. In particular, assume a large network where all we know is that nodes may have either 1, 2 or 3 neighbors with the same probability. In this case, $\mu_0^N(i)=1/3$ for $i=1,2,3$, and $\mu_0^N(i)=0$ for the rest. The same will be true for $\bar\mu_0$. It is then straightforward to see that in this case the infinite dimensional differential equation described by Eq.\ \eqref{eq:ecdif_ideal} results in a 4-dimensional one, where we have to determine $\bar\mu_t(i)$ for $i=0,\ldots,3$, and the rest will be identically zero. In general, if $\bar\mu_0(i)=0$ for all $i>D$, then the system has $D+1$ differential equations with $D+1$ functions to be calculated. In this case, we may resort to numerical methods to solve the system. 

The resulting solution $\bar\mu_t$ and several realizations of $\mu_t^N$ (with $N=1000$) for this particular example are compared in Fig.\ \ref{fig:ex_ideal_mu}. The graph illustrates how $\bar\mu_t$ effectively represents the ``mean'' of the actual process $\mu^N_t$. 

\end{example}

 \begin{figure}
     \centering
     \includegraphics[scale=0.4]{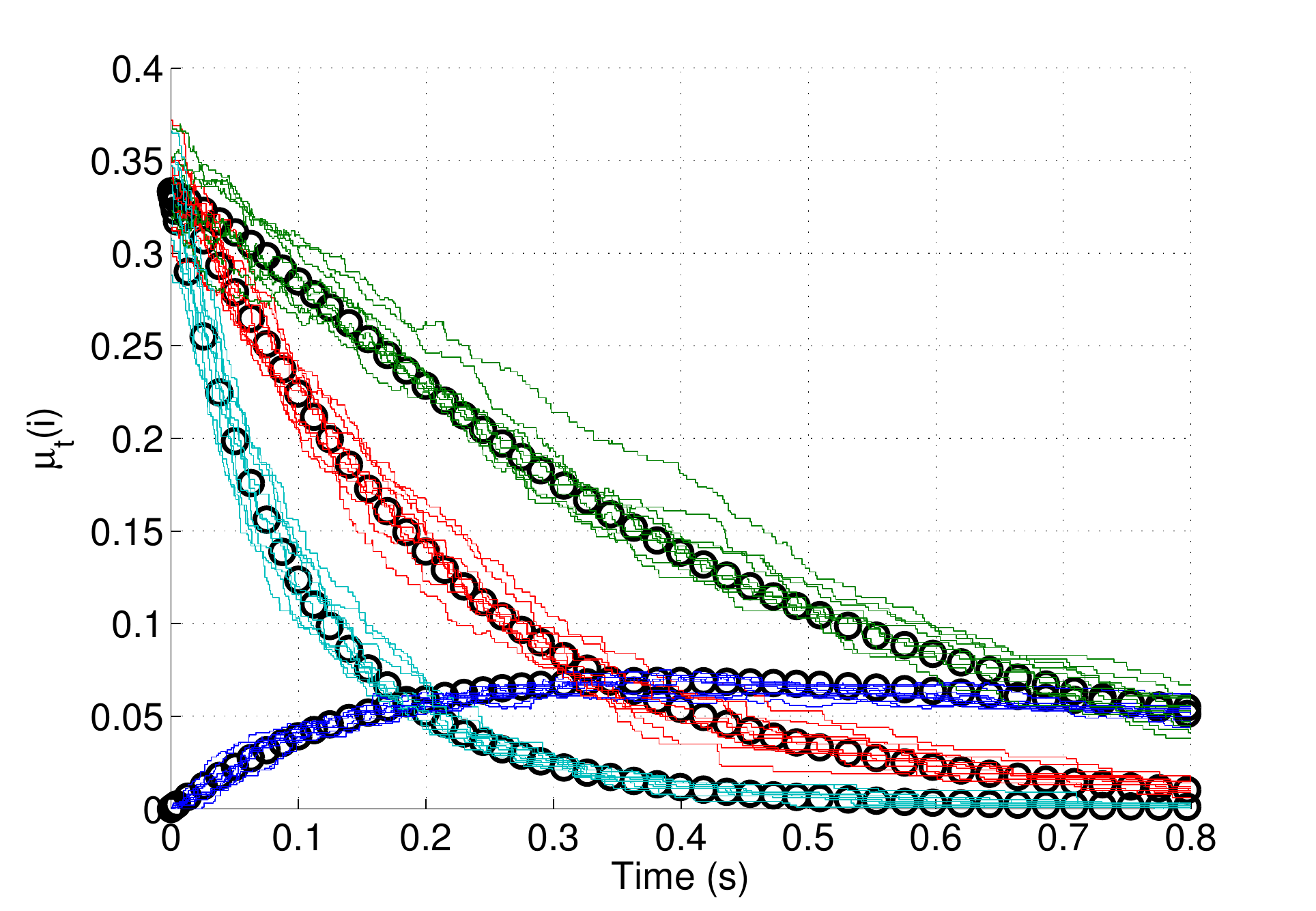}
     \caption{An example comparing several realizations of $\mu_t^N$ ($N=1000$) and the solution of \eqref{eq:ecdif_ideal} (marked as circles), where $\mu_0^N=\bar\mu_0=(0,1/3,1/3,1/3)$. The four coordinates of the process are shown overimposed.}\label{fig:ex_ideal_mu}
 \end{figure}

\subsection{Estimating the spatial reuse}\label{subsec:spatial_reuse}

Let us now discuss how to calculate the spatial reuse based on $\bar\mu_t$. As discussed before, we may define an auxiliary Markov chain $c_t$ that counts how many CTS frames have been sent up to time $t$. Its transition times are the same as $\mu_t$, and it will increase by one only if a CTS is received by the tagged node (i.e.\ if it has an unexplored neighbor) which occurs with probability $1-\alpha_{t}(0)=1-\mu_{t}(0)/\sum_j\mu_{t}(j)$. Based on the result presented before, 
 $c_t/N$ converges when $N$ goes to infinity to $\bar c_t$ which is deterministic and given by the following differential equation: 
\begin{gather*}
    {d \over dt}\bar c_t = \lambda\sum_j\bar\mu_t(j) \left(1-\frac{\bar\mu_t(0)}{\sum_j \bar\mu_t(j)}\right) = \lambda \sum_{j>0}\bar\mu_t(j), 
\end{gather*}
where $\bar\mu_t$ is obtained from the solution of \eqref{eq:ecdif_ideal}. Since the spatial reuse is simply the limit in $t$ of $\bar c_t$, we have obtained the following result:
\begin{prop}
    Let $\Theta$ be a random variable indicating the number of successful transmissions that take place in a given contention period, as described in Secs.\ \ref{subsec:naive} or \ref{subsec:conf_algo}. When $N$ goes to infinity the following equality holds: 
    \begin{gather}\label{eq:sr_ideal}
        \E\{\Theta\} = \theta = \lambda\int_0^\infty \sum_{j>0}\bar\mu_t(j)dt,
    \end{gather}
    where $\bar\mu_t$ be the solution to \eqref{eq:ecdif_ideal}. Equivalently, let $\bar u_t=\sum_i\bar\mu_t(i)$ be the proportion of unexplored nodes at time $t$ and $P_t(CTS)$ be the probability of receiving a CTS at time $t$ (in this case $P_t(CTS)=1-\bar\alpha_t(0)$). We may re-write Eq.\ \eqref{eq:sr_ideal} as: 
    \begin{gather}\label{eq:sr_pcts}
        \theta = \lambda\int_0^\infty \bar u_tP_t(CTS)dt.
    \end{gather}
\end{prop}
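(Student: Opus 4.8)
The plan is to reduce the claim to the large-graph limit already granted by the Theorem, by handling the CTS-counting coordinate through its compensator and then exchanging the limit $N\to\infty$ with an integral over the whole half-line. For the graph of size $N$ write $c^N_t$ for the CTS counter and, since the measure component reaches its absorbing state $\mathbf 0$ in finite time almost surely, set $\Theta^N:=\lim_{t\to\infty}c^N_t/N$, so that $\theta=\lim_{N\to\infty}\esp{\Theta^N}$. First I would record the intensity of $c^N_t$: transitions occur at total rate $\lambda\sum_j\mu^N_t(j)$ and each yields a CTS with probability $1-\mu^N_t(0)/\sum_j\mu^N_t(j)$, so $c^N_t$ is a counting process with stochastic intensity $\lambda\sum_{j>0}\mu^N_t(j)$. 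Consequently $c^N_t-\lambda\int_0^t\sum_{j>0}\mu^N_s(j)\,ds$ is a martingale; letting $t\to\infty$ (all terms nonnegative, monotone convergence), taking expectations, dividing by $N$ and using Tonelli yields $\esp{\Theta^N}=\lambda\int_0^\infty\esp{\sum_{j>0}\bar\mu^N_s(j)}\,ds$, with $\bar\mu^N=\mu^N/N$.

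The core step is then to pass to the limit under this improper integral. By the Theorem, $\bar\mu^N\to\bar\mu$ in probability uniformly on compact intervals; since $\sum_{j>0}\bar\mu^N_t(j)=\cro{\bar\mu^N_t,1}-\bar\mu^N_t(0)$ is a bounded continuous functional of $\bar\mu^N_t$, bounded convergence gives $\esp{\sum_{j>0}\bar\mu^N_t(j)}\to\sum_{j>0}\bar\mu_t(j)$ for each fixed $t$. To exchange this pointwise limit with $\int_0^\infty$ I would produce a uniform, time-integrable dominating function. Because every transition removes at least the transmitter, the number of unexplored nodes $|\U_t|=\sum_j\mu^N_t(j)$ has drift at most $-\lambda|\U_t|$, so ${d\over dt}\esp{|\U_t|}\le-\lambda\esp{|\U_t|}$ and hence $\esp{\bar u^N_t}\le e^{-\lambda t}$ uniformly in $N$, where $\bar u^N_t=|\U_t|/N$. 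Since $\esp{\sum_{j>0}\bar\mu^N_t(j)}\le\esp{\bar u^N_t}\le e^{-\lambda t}$, dominated convergence applies and delivers $\theta=\lambda\int_0^\infty\sum_{j>0}\bar\mu_t(j)\,dt$, which is \eqref{eq:sr_ideal}; equivalently, integrating the differential equation for $\bar c_t$ with $\bar c_0=0$ gives $\bar c_t=\lambda\int_0^t\sum_{j>0}\bar\mu_s(j)\,ds$ and $\theta=\lim_{t\to\infty}\bar c_t$.

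It then remains only to rewrite the integrand for the second form: $\sum_{j>0}\bar\mu_t(j)=\cro{\bar\mu_t,1}-\bar\mu_t(0)=\bar u_t\bigl(1-\bar\mu_t(0)/\bar u_t\bigr)=\bar u_t\bigl(1-\bar\alpha_t(0)\bigr)=\bar u_t\,P_t(CTS)$, giving \eqref{eq:sr_pcts}. I expect the main obstacle to be precisely the interchange of $\lim_{N\to\infty}$ with the integral over the unbounded interval $[0,\infty)$: the Theorem controls the process only on compact time sets, so one must rule out a non-vanishing tail contribution from the still-unexplored nodes as $t\to\infty$, uniformly in $N$. The exponential estimate $\esp{\bar u^N_t}\le e^{-\lambda t}$ is the ingredient that makes this rigorous, and establishing it (together with checking that the extra counting coordinate does not spoil the functional law of large numbers of the Theorem) is where the real work lies.
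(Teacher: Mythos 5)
Your argument is essentially the paper's: the authors likewise identify the CTS counter's jump rate as $\lambda\sum_{j>0}\mu_t(j)$ (total transition rate $\lambda\sum_j\mu_t(j)$ times the success probability $1-\alpha_t(0)$), pass to the fluid limit to obtain ${d\over dt}\bar c_t=\lambda\sum_{j>0}\bar\mu_t(j)$, and read off $\theta=\lim_{t\to\infty}\bar c_t$. The only difference is that you make explicit the compensator/martingale step and the uniform-in-$N$ domination $\esp{\bar u^N_t}\le e^{-\lambda t}$ needed to interchange $N\to\infty$ with the integral over $[0,\infty)$, a point the paper leaves implicit; this is a correct and welcome tightening rather than a different route.
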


\section{Spatial reuse estimation over different interference graphs}\label{sec:examples}

In this Section, we first look at the precision of the approximation of the spatial reuse on configuration
models with a finite number of nodes. Later on, we show how this methodology may be efficiently used to estimate the spatial reuse of more complicated interference graphs steaming from spatial models.

\subsection{Configuration model with a uniform distribution}

As a first example to illustrate the precision of Eq.\ \eqref{eq:sr_ideal} for finite large $N$, let us consider the example in Fig.\ \ref{fig:ex_ideal_mu}. More precisely, we suppose that the network has a number of neighbors ranging from $5-k$ to $5+k$ (with $k$ varying between 0 and 5), all with the same probability. Figure \ref{fig:ex_ideal_sr} compares the limiting value $\theta$ (obtained from the solution of the limiting differential equation) and the simulations of 10 contention periods (in the form of a boxplot) for $N=1000$. 

Eq.\ \eqref{eq:sr_ideal} provides an excellent approximation to the mean of $\Theta$. Please note that the mean number of neighbors of each node is always 5, independently of $k$, which may be regarded as a parameter that controls the variance of the initial degree distribution. Interestingly, note that for low values of $k$ the spatial reuse seems independent of $k$ whereas when it reaches $5$, the spatial reuse significantly decreases. 
{\bf As intuition tells, this particular example illustrates that the spatial reuse does not depend only on the mean number the degree distribution.}

 \begin{figure}
     \centering
     \includegraphics[scale=0.4]{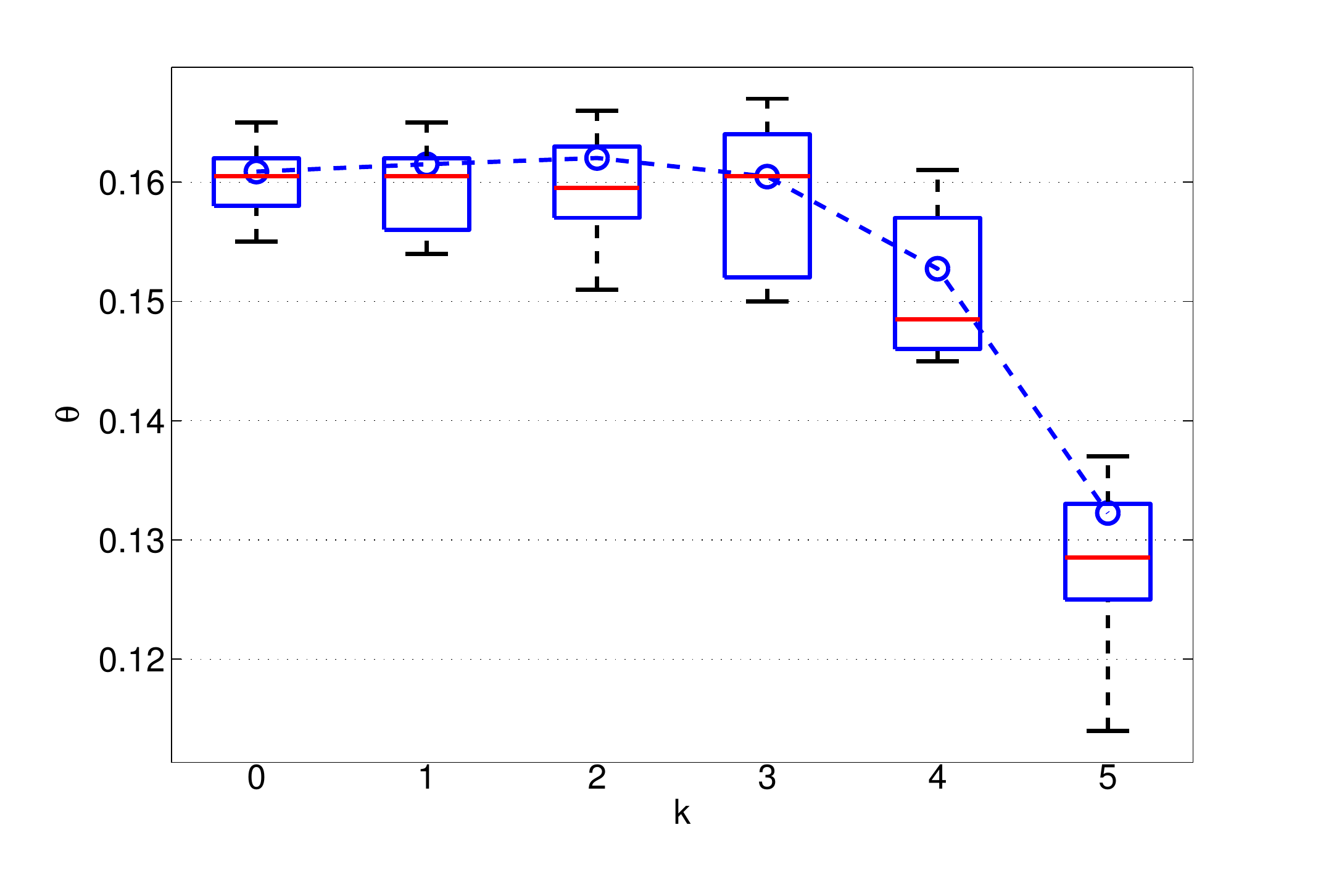}
     \caption{The evaluation of Eq.\ \eqref{eq:sr_ideal} along with the boxplot of the numerical results of 10 simulations for $N=1000$. The initial nodes' degree is uniformly distributed between $5-k$ and $5+k$, where $k$ is indicated in the abscissa. }\label{fig:ex_ideal_sr}
 \end{figure}


\subsection{Configuration model with Poisson distribution}\label{subsec:poisson}

When the degree distribution tends to a Poisson distribution, the differential equation can be greatly simplified using the
large amount of independence of the Erd\"os R�nyi graph.
We indeed showed in \cite{aap} that a state space collapse does happen in that case since the
number of explored nodes is itself Markov.
The differential equation reads:

\begin{gather}
    {d\over dt}\bar u_t = -\lambda \bar u_t\big(e^{-\nu\bar u_t}+\nu\bar u_t + (1-e^{-\nu\bar u_t})(1+\nu\bar u_t)\big)\nonumber\\
    \Rightarrow {d\over dt}\bar u_t = -\lambda\bar u_t\big(1+2\nu\bar u_t -e^{-\nu\bar u_t}\nu\bar u_t\big), \label{eq:eq_diff_er_ideal}
\end{gather}
and in this case the spatial reuse may be calculated as:
\begin{gather}\label{eq:theta_er_ideal}
    \theta = \lambda\int_0^\infty(1-e^{-\nu\bar u_t})\bar u_t dt, 
\end{gather}
where $\bar u_t$ is obtained from Eq.\ \eqref{eq:eq_diff_er_ideal}. 

Figure \ref{fig:theta_er_ideal} shows the corresponding spatial reuse for 100 contention periods (in the form of a boxplot) for different values of both $\nu$ and $N$, along with the estimation provided by Eq.\ \eqref{eq:theta_er_ideal}. Some interesting conclusions may be drawn from these graphs. 

 \begin{figure}
     \centering
     \includegraphics[scale=0.37]{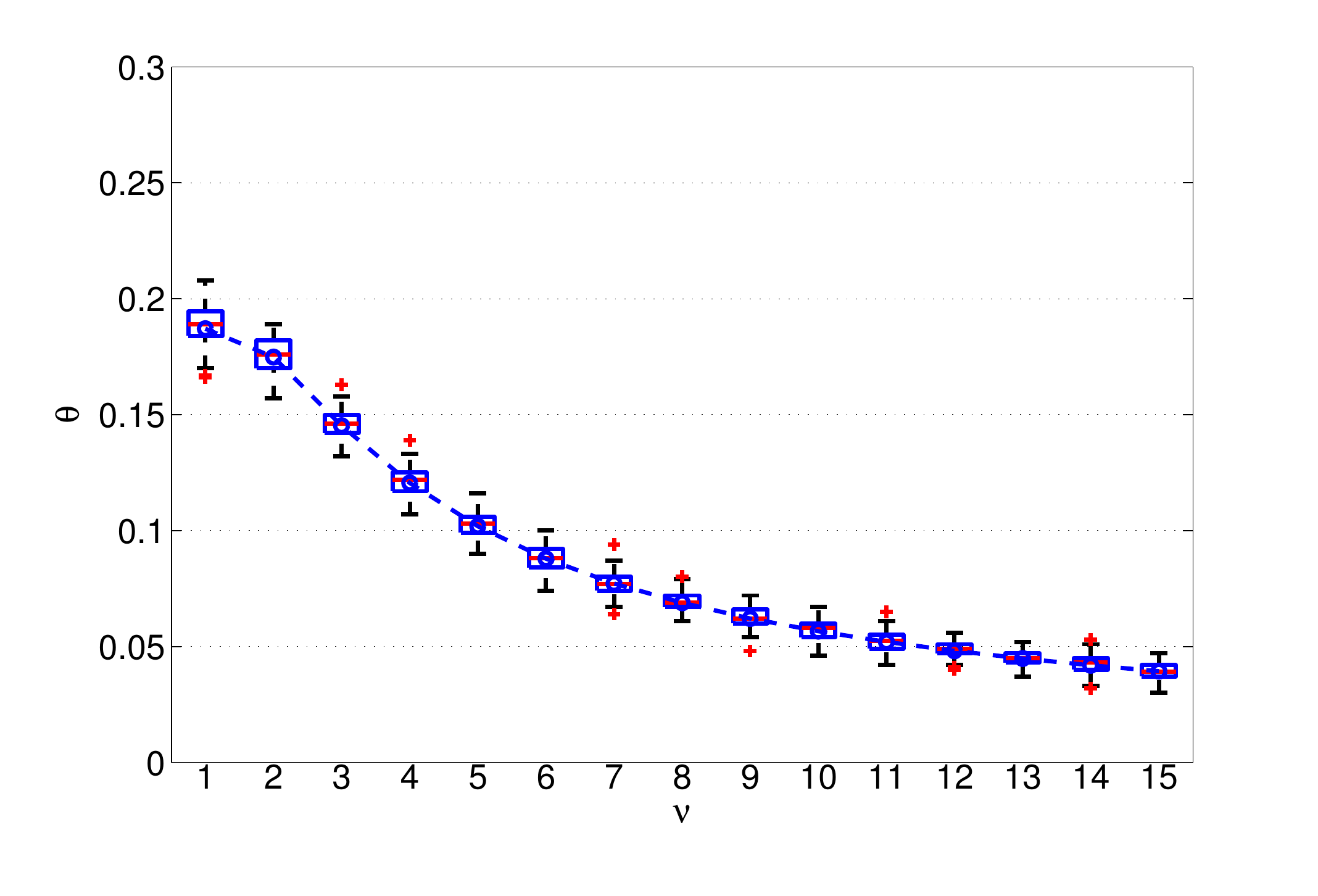}
     \includegraphics[scale=0.37]{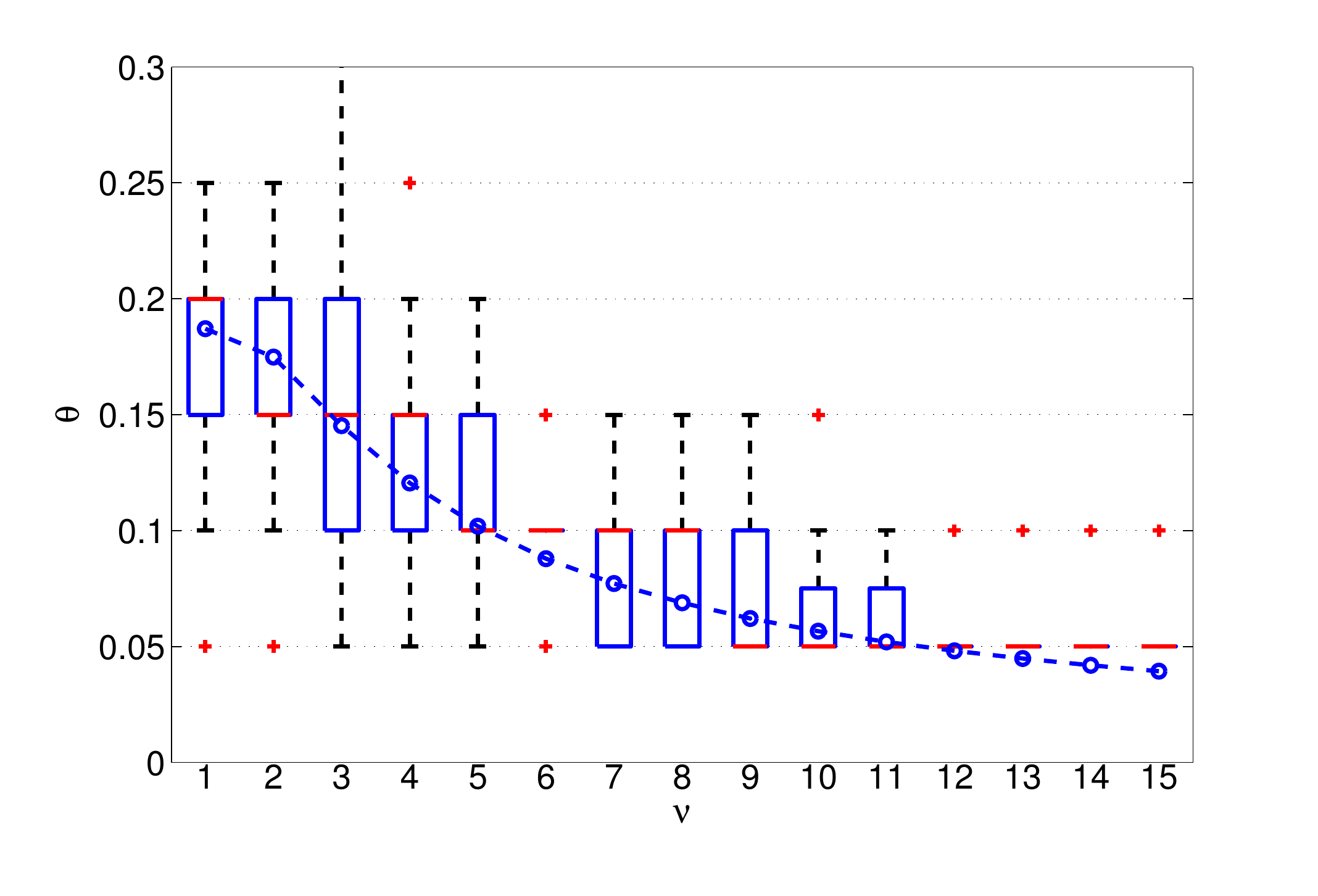}
     \caption{The evaluation of Eq.\ \eqref{eq:theta_er_ideal} along with the boxplot of the numerical results of 100 simulations for $N=1000$ (left) and $N=20$ (right). The initial nodes' degree is distributed as a Poisson with parameter $\nu$. }\label{fig:theta_er_ideal}
 \end{figure}

Firstly, when $N$ is large, the estimation is not only an excellent approximation to the mean spatial reuse, but the variance of $\Theta$ is very small. Secondly, as $N$ decreases, and although the variance increases significantly, Eq.\ \eqref{eq:theta_er_ideal} still provides an excellent approximation to the mean spatial reuse. This is further illustrated by Fig.\ \ref{fig:comp_theta_er_ideal}. Lastly, a comparison between Figs.\ \ref{fig:ex_ideal_sr} and \ref{fig:theta_er_ideal} shows again that the spatial reuse does not depend on the mean degree:
the present example obtains a spatial reuse of roughly 0.1 for a mean $\nu=5$ neighbors and a variance  $\nu=5$. This same variance would be obtained in the uniform case shown in Fig.\ \ref{fig:ex_ideal_sr} with $k$ between 3 and 4, which results in a spatial reuse of more than 0.15. 

 \begin{figure}
     \centering
     \includegraphics[scale=0.37]{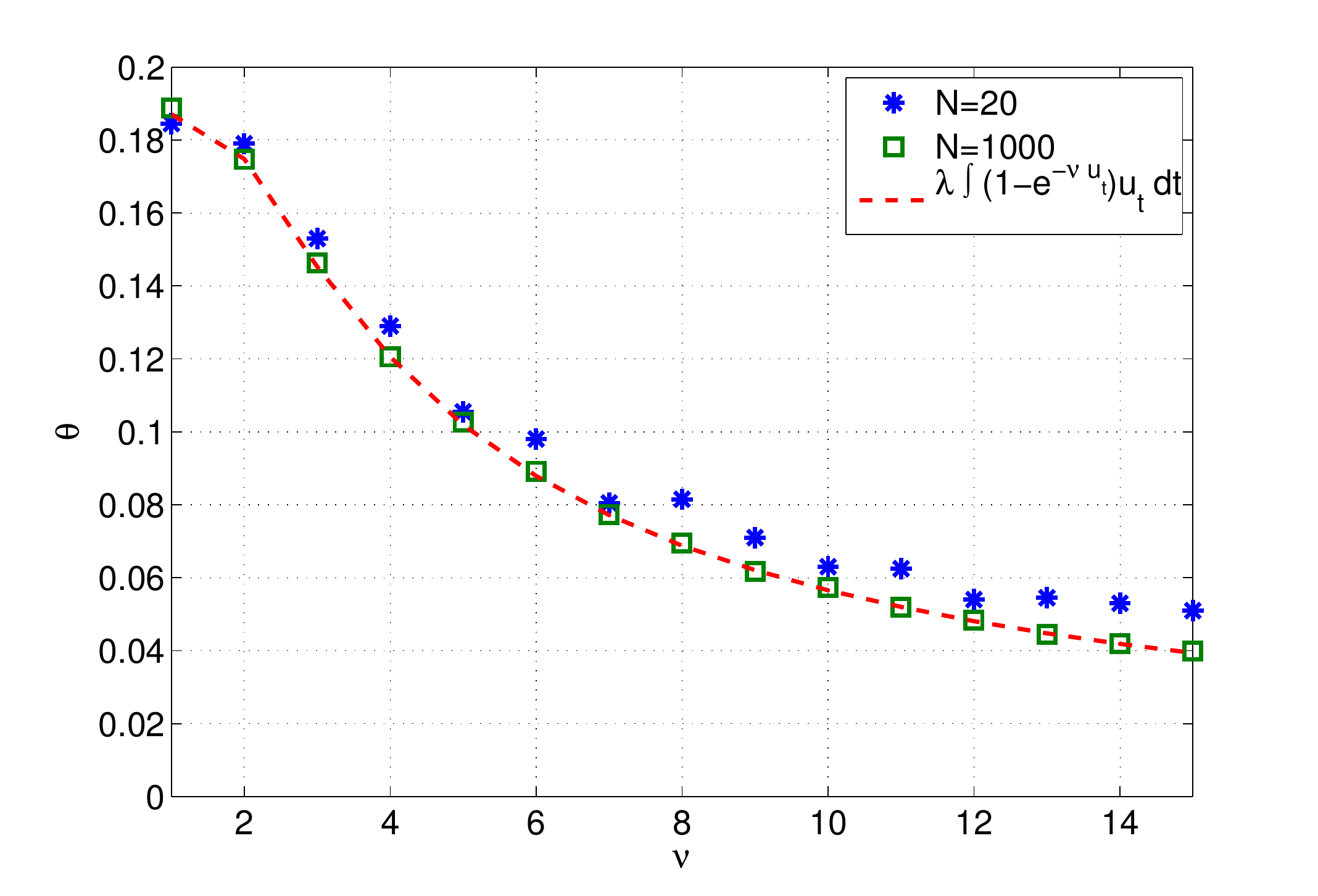}
     \caption{The mean of the numerical results of 100 simulations for $N=1000$ and $N=20$, along with the evaluation of Eq.\ \eqref{eq:theta_er_ideal}. The initial nodes' degree is distributed as a Poisson with parameter $\nu$. }\label{fig:comp_theta_er_ideal}
 \end{figure}


\subsection{Comparison with spatial reuses on fixed graphs}

We now discuss a scenario falling outside the scope of the initial assumptions 
to illustrate the efficiency of this method for a large class of models.
 Let us assume that the communication graph is not random, but fixed. 
 Thus, we may calculate the empirical distribution of neighbors of the initial graph (i.e.\ $\mu_0$) and calculate \eqref{eq:sr_ideal} by solving Eq.\ \eqref{eq:ecdif_ideal}. 
Though our method would consider a graph which is chosen randomly among all graphs that comply with the initial degree distribution, instead of a fixed graph,
the resulting spatial reuse still gives a reasonable approximation inmany cases. 

As a toy example, consider a lattice where every node has exactly $k$ neighbors. Figure \ref{fig:lattice_graphs} shows a portion of these graphs for $k=2,4$. The initial distribution in this case is $\mu_0(i)=\delta(i-k)$. 

 \begin{figure}
     \centering
     \includegraphics[scale=0.3]{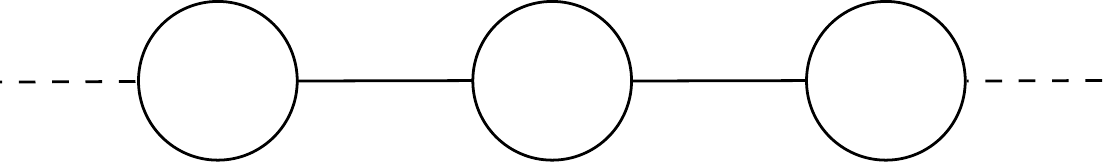}
     \includegraphics[scale=0.3]{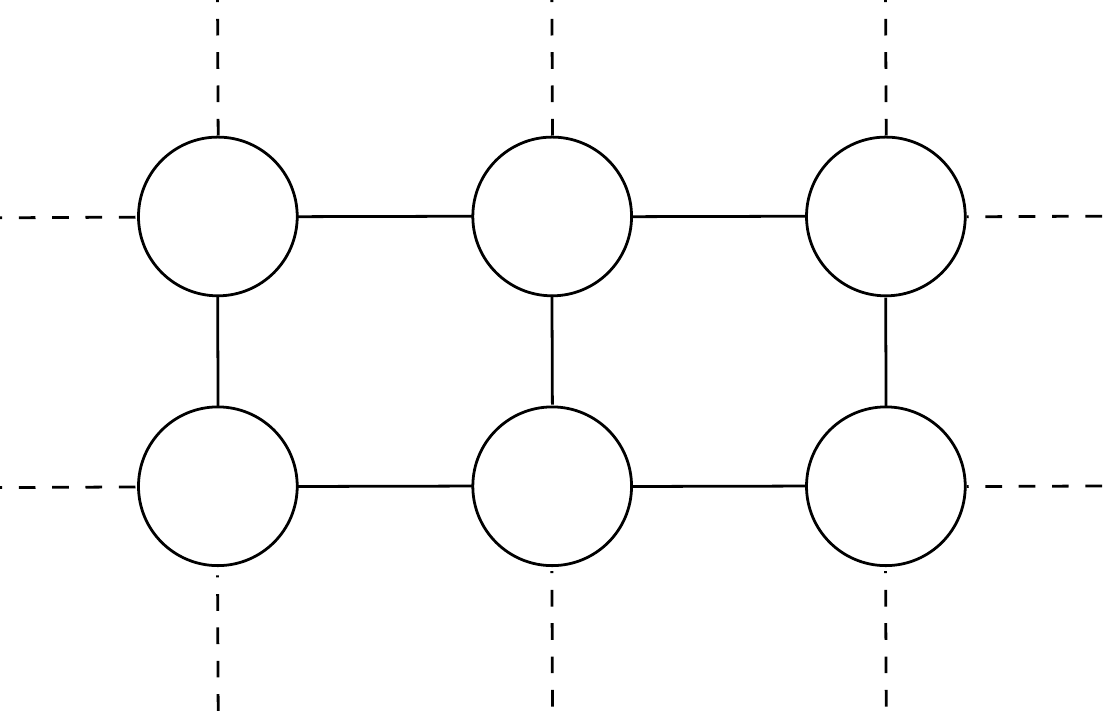}
     \caption{The graph lattices considered: a line and a grid.}\label{fig:lattice_graphs}
 \end{figure}

 For $k=4$, simulations indicate a spatial reuse of 0.17, whereas our estimation is 0.185. 
 Note that in the case of $k=2$, all random graphs generated by $\delta(i-k)$ are either as in Fig.\ \ref{fig:lattice_graphs} or several circles of interconnected nodes. Since as $N$ increases these circles include several nodes, our estimation is asymptotically exact.


\subsection{Comparison with the parking on a Poisson point process}\label{subsec:spa_poisson}


As previously underlined, our approach based on configurations models
essentially ignores correlations  between edges of the interference graph that are present when the graph  steams from a spatial model. 
We numerically show here that this effect is quantitatively very small as
soon as the interference graph has a sufficient amount of noise in the case of parking processes on a Poisson point process.
Consider then the ``classic'' model where nodes are randomly and uniformly located in a plane. A transmission with power $P$ of node $s$ is received at node $r$ with a \emph{mean} power $P\times L(d_{sr})$, where $d_{sr}$ is the distance between $s$ and $r$, and $L(\cdot)$ is a monotonous decreasing function (generally termed \emph{path loss}). This mean is taken over several time-slots. The receiving power during a given time-slot (which we will note $P(s,r)$) has random fluctuations around this mean, resulting in:
\begin{gather*}
    \frac{P(s,r)}{P} = L(d_{sr})\times X_{sr}, 
\end{gather*}
where $X_{sr}$ (generally termed \emph{fading}) is a random variable with mean value equal to one. 

Given a realization of the spatial process and the fading between every pair of nodes, the communication graph is constructed by including an edge between a pair of nodes $s$ and $r$ if $P(s,r)>P_{\min}$. $P_{\min}$ is the sensitivity of the receiver, indicating the minimum power that it requires to correctly decode a frame. We will assume that fading is symmetrical, so that the resulting channels are also symmetrical.

As an example, consider a path-loss function $L(d)=d^{-a}$ (with $a=2$) and log-normally distributed fading (whose logarithm is normally distributed with mean $0$ and variance $\sigma^2$). Nodes will be positioned in the plane as in a Poisson process with intensity 1, and $P/P_{\min}$ will be such that when $\sigma=0$ the mean number of neighbors of each node will be $\nu=2$ (i.e.\ $P/P_{\min}=(\pi/\nu)^{a/2}$). 

Figure \ref{fig:sr_hcmasruido} shows the results corresponding to this scenario for different values of $\sigma$. Please note that $\sigma=0$ corresponds to a variant of the so-called Mat\`ern hard core process~\cite{stoyan_matern}. Just like in the previous example, in this case considering only the empirical degrees' distribution results in a loss of information with a significant impact on the resulting spatial reuse. For instance, it is very likely that $\sTX$ and $\sRX$ have a neighbor in common. This in turn results in an underestimation of the spatial reuse, as shown in Fig.\ \ref{fig:sr_hcmasruido}. However, as $\sigma$ increases, this ``spatial correlation'' becomes weaker: the event of two nodes being neighbors is relatively less influenced by their distance. This results in increasingly more precise estimations of our method, which for a relatively small $\sigma=1$ already provides a very accurate estimation of the spatial reuse. 

 \begin{figure}
     \centering
     \includegraphics[scale=0.3]{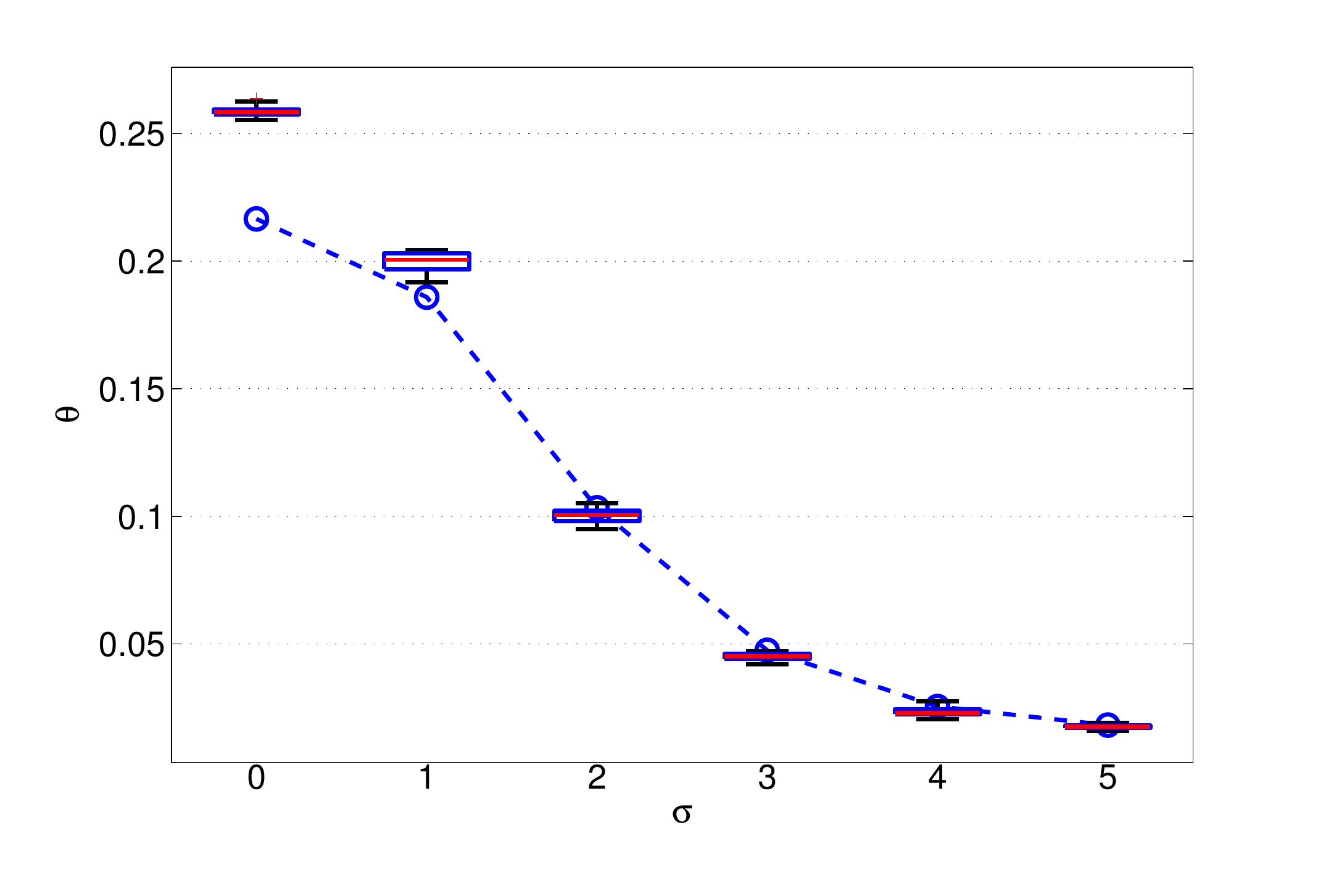}
     \caption{The evaluation of Eq.\ \eqref{eq:sr_ideal} along with the boxplot of the numerical results of 10 time-slot simulations for a Poisson process with log-normal fading and a path-loss of the form $L(r)=r^{-2}$. The value of $\sigma$ corresponds to the standard deviation of the underlying normal distribution. }\label{fig:sr_hcmasruido}
 \end{figure}

\section{Model extensions}\label{sec:extensions}

In the previous sections we have analyzed what we termed an ideal case, where the tagged node sends an RTS to every one of its neighbors until either one of them answers with a CTS or no more neighbors are left. Equivalently, this may be regarded as a situation where the tagged node's intended destination is always available. This could model for instance an opportunistic network where the actual destination of the RTS is any available node. If this is not the case, we are compelled to analyze what happens when the RTS/CTS handshake fails. 

In the following two subsection we briefly illustrate how our framework may be adapted with relative ease to study this scenario as well. More in particular, we will consider two possible situations when the tagged node is only interested in a single random neighbor. In the first one, if this node does not answer with a CTS, the neighbors of the tagged node (that overheard the RTS frame) will still be blocked for the rest of the contention period. In the second one, under this event, the neighbors of the tagged node, realizing that the data transmission did not start, ignore the RTS and are not blocked by this failed handshake. 

\subsection{RTS/CTS handshake failure}


Let us then consider the first scenario described above. For the sake of clarity, let us adapt the algorithm presented in Sec.\ \ref{subsec:naive} (which discussed how the \emph{unexplored}, \emph{active} and \emph{blocked} set evolved over time) and highlight the differences with the ideal case. Assume we are in a transition time $t_n$ when an unexplored node $s$ sends an RTS frame. The first step is still to update the active and unexplored sets as follow: $\A_{t_n^+}\leftarrow\A_{t_n}\cup\{s\}$ and $\U_{t_n^+}\leftarrow\U_{t_n}\setminus\{s\}$. 

From the set of neighbors of $s$ ($\mathcal{N}_s$), we choose a random node ($r$) if any. Then, the following two cases are possible: 
\begin{enumerate}
    \item If  $r \notin \U_{t_n^+}$ (or $\mathcal{N}_s=\emptyset$) is when this scenario differs from before. The set of active nodes remains unchanged and we have to update the set of blocked nodes so as to include only the neighbors of $s$: $\B_{t_n^+}\leftarrow \B_{t_n}\cup \mathcal{N}_{s}$. The set of unexplored nodes is updated accordingly; i.e.\ $\U_{t_n^+}\leftarrow \U_{t_n^+}\setminus \mathcal{N}_{s}$. We then proceed to the next node. 
    \item If $r\in \U_{t_n^+}$ we proceed exactly as before. That is to say, we include $r$ among the active nodes ($\A_{t_n^{+}}\leftarrow \A_{t_n^+}\cup \{r\}$), update the blocked set of nodes ($\B_{t_n^+}\leftarrow \B_{t_n}\cup (\mathcal{N}_{s}\setminus \{r\})\cup (\mathcal{N}_{r}\setminus \{s\})$) and the unexplored one ($\U_{t_n^+}\leftarrow \U_{t_n^+}\setminus \left(\mathcal{N}_{i}\cup\mathcal{N}_{r_i}\right)$). We then proceed to the next node. \label{paso:cts_rtsmata}
\end{enumerate}

Regarding the measured-value Markov chain approach, a measure state $\mu_t$ involving only the edges between unexplored nodes of the graph will not be enough. In particular, we now also need to track the number of blocked neighbours of the unexplored nodes to calculate the probability of successfully completing the RTS/CTS handshake, which in turn defines the amount of nodes that are removed from the measure and their degree. We propose then to define a bi-dimensional measure $\mu_t(i,j)$ representing the number of unexplored nodes with $i$ half-edges toward the unexplored set ($\sU\to\sU$ type) and $j$ half-edges toward the blocked set ($\sU\to\sB$ type) at time $t$. 

The initial measure will be then defined as:
\begin{gather*}
\mu_0(i,j)=
\begin{cases}
    h(i)& \mbox{if} \quad  j=0,\\
       0& \mbox{if} \quad  j>0.
\end{cases}
\end{gather*}
That is to say, at $t=0$ all nodes points towards unexplored nodes, and the measure is given by the nodes' degree histogram. Please note that $\sum_j\mu_t(i,j)$ yields the measure we had in the ideal case. 

It is relatively straightforward to verify that this measure will be enough to characterize the spatial reuse in this case. For instance, and following the notation we used in Sec.\ \ref{subsec:conf_algo}, since TX is randomly chosen from all the unexplored nodes, its degree towards $\U_t$ and $\B_t$ has joint probability distribution given by:
\begin{gather*}
    P(K^{\TX}_{\sU}=i,K^{\TX}_{\sB}=j) = \alpha_t(i,j) = \frac{\mu_t(i,j)}{\sum_{k,l}\mu_t(k,l)}, 
\end{gather*}
where we added the subscripts $\sU$ and $\sB$ to indicate the type of half-edge. Moreover, all the unexplored neighbors of an unexplored node (and in particular $RX$) have the following degree distribution: 
\begin{gather*}
    P(K^{\RX}_{\sU}=i,K^{\RX}_{\sB}=j) = \beta_t(i,j) = \frac{i\mu_t(i,j)}{\sum_{k,l}k\mu_t(k,l)}. 
\end{gather*}

As in the previous case, each step of the algorithm induces a modification on the measure $\mu_t$. If again we scale it by considering $\mu_t^N=\mu_t/N$ and take the limit  when $N$ goes to infinity we obtain a fluid limit  for the evolution of the measure $\mu_t$. This limit is a deterministic measure $\bar\mu_t$ which (as before) will be the solution of an infinite dimensional differential equation system. Following the discussion we presented for the ideal case, the right-hand of the equation for the evolution of $\mu_t(i,j)$ should be the mean number of nodes with $i$ half-edges of type $\sU\to\sU$  and $j$ half-edges of type $\sU\to\sB$ that are removed at time $t$, times the normalized total transition rate.
 
In this case the differential equation thus results: 
\begin{multline}\label{eq:ecdif_mataRTS_exp}
    {d \over dt } \bar\mu_t(i,j) = -\lambda \sum_{k,l} \bar\mu_t(k,l) \Bigg[ \bar\alpha_t(i,j) + P_t(CTS^c) \mathbb{E}\{Y^{\TX}(i,j) + X^{\TX}(i,j) - X^{\TX}(i+1,j-1)\,|\,CTS^c\}  \)\Bigg. \\ 
    \Bigg. + P_t(CTS)  \( \bar\beta(i,j) + \mathbb{E}\{ Y^{\TX,\RX}(i,j)+X^{\TX,\RX} (i,j) - X^{\TX,\RX} (i+1,j-1)\,|\,CTS \} \)\Bigg], 
\end{multline}
where by $CTS$ we refer here to the event described in step \ref{paso:cts_rtsmata} above (i.e. the selected receiver is unexplored), and the superscript $c$ refers to its complement. Thus we have that: 
\begin{gather*}
    P_t(CTS) =\sum_{k,l}P_t(CTS\,|\,K^{\TX}_{\sU}=k,K^{\TX}_{\sB}=l)\bar\alpha_t(k,l)=\sum_{k>0,l}\frac{k}{k+l}\bar\alpha_t(k,l). 
\end{gather*}
Please note that the main difference between Eqs.\ \eqref{eq:ecdif_mataRTS_exp} and \eqref{eq:ecdif_ideal_exp} lies in the definition of the event $CTS$ and in the fact that $\sTX$ blocks its neighbors even when this event does not occur (the second term between brackets). 

The expected value of the rest of the random variables may be obtained analogously to how we proceeded before. For instance, the degree of the neighbors of $\sTX$ ($Y^{\TX}(i,j)$) follows a hypergeometric distribution with parameters $K^{\TX}$ and $\bar\beta_t(i,j)$. Then, the mean value of $Y^{\TX}$ results:
\begin{gather*}
    \E \{Y^{\TX}(i,j)\,|\,CTS^c\} = \E\{K^{\TX}_{\sU}\,|\,CTS^c\} \bar\beta_t(i,j). 
\end{gather*}
Moreover, the number of nodes whose degree should be updated (the unexplored ``neighbors of the neighbors'') are $Z^{\TX}=\sum_{k,l}(k-1)Y^{\TX}(k,l)$, and we thus have that: 
\begin{gather*}
    \E\{X^{\TX}(i,j)\,|\,CTS^c\}=\beta_t(i,j)\sum_{k,l}(k-1)\E\{Y^{\TX}(k,l)\,|\,CTS^c\}. 
\end{gather*}

All in all, Eq.\ \eqref{eq:ecdif_mataRTS_exp} may be written in terms of $\bar\mu_t$, $\bar\alpha_t$ and $\bar\beta_t$ as follows:
\begin{multline}
    {d \over dt } \bar\mu_t(i,j) = -\lambda \sum_{k,l} \bar\mu_t(k,l) \left[ \bar\alpha_t(i,j) + \bar\beta_t(i,j)\sum_k k\bar\alpha_t(k)  + \left(\sum_{k>0,l} \frac{k}{k+l}\bar\alpha_t(k,l)\right) \bar\beta_t(i,j) \( \sum_{k,l} (k-1)\bar\beta_t(k,l)\) \right. \label{eq:ecdif_cts_mata} \\ 
        + \left. (\bar\beta_t(i,j)- \bar\beta_t(i+1,j-1) ) \sum_{k,l} (k-1)\bar\beta_t(k,l) \left( \sum_{k,l} k\bar\alpha_t(k,l) + \left(\sum_{k>0,l} \frac{k}{k+l}\bar\alpha_t(k,l) \right) \(\sum_{k,l} (k-2)\bar\beta_t(k,l)\) \right)
\right].
\end{multline}

Once we solved the previous differential equations we can estimate the spatial reuse as we did in Sec.\ \ref{subsec:spatial_reuse}. We obtain then the following relation:
\begin{gather}\label{eq:sr_bloqrts}
    \E\{\Theta\} = \theta = \lambda\int_0^\infty \bar u_t P_t(CTS) dt = \lambda\int_0^\infty \sum_{k,l}\bar\mu_t(k,l)\sum_{k>0,l}\frac{k}{k+l}\bar\alpha_t(k,l)dt, 
\end{gather}
where $\bar\mu_t$ is the solution of \eqref{eq:ecdif_cts_mata}. 

All the discussion we presented in the previous sections also applies to this case. For instance, and as example of both the precision and the limitations of our approach, Fig.\ \ref{fig:sr_hcmasruido_bloqrts} shows the results for this case in the same scenario as in Sec.\ \ref{subsec:spa_poisson} (Poisson hard core process). Again, the information lost by considering only the initial nodes' degree distribution may have a significant impact (small values of $\sigma$ in this case). If this is not the case, our approach obtains very precise results. 

 \begin{figure}
     \centering
     \includegraphics[scale=0.3]{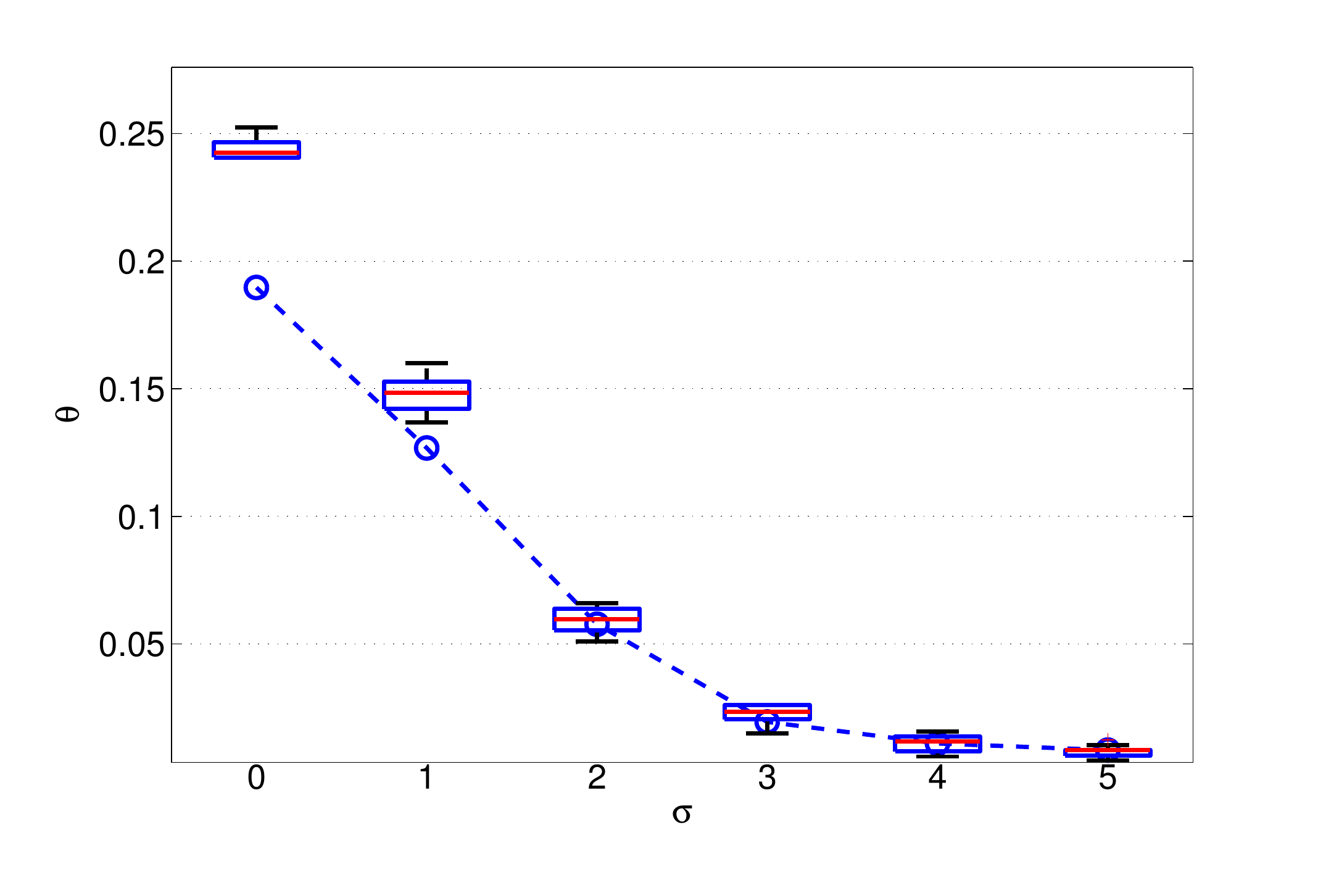}
     \caption{The evaluation of Eq.\ \eqref{eq:sr_bloqrts} along with the boxplot of the numerical results of 10 time-slot simulations for a Poisson process with log-normal fading and a path-loss of the form $L(r)=r^{-2}$. The value of $\sigma$ corresponds to the standard deviation of the underlying normal distribution. }\label{fig:sr_hcmasruido_bloqrts}
 \end{figure}

\subsection{RTS/CTS handshake failure with timeout}

Let us study the second scenario under an unanswered RTS frame. In the previous subsection, the neighbors of TX were blocked by the RTS and did not compete further for the channel, even if no data transmissions ensued. However, a time-out is generally implemented in this kind of access, where the RTS alone blocks the nodes during a certain time. If no further transmissions are sensed afterwards, the RTS is ignored and the node starts competing for the medium again. Let us then consider an idealization of this mechanism, where this realization is instantaneous. As we will discuss later, the presented extension is also capable of modeling a situation where the tagged node has no packets to send and acts only as a receiver. 

Please note that in this case, nodes whose RTS frame went unanswered will in turn be able to answer with a CTS if a RTS frame is addressed to them. Thus, they belong neither to the blocked nor active set of nodes. In this section we define a new class of nodes: \emph{sans}-CTS. We will say the node belongs to class $\sS$ (and the corresponding set $\S_t$) if it is available only as a receiver (it has tried to communicate without success). 

To highlight the differences with the previous scenario, we discuss here the different possibilities that arises when an unexplored node $s$ tries to communicate with a randomly selected neighbour at time $t_n$. The first step this time is to update only the set of unexplored nodes $\U_{t_n^+}\leftarrow\U_{t_n}\setminus\{s\}$. Once we have chosen a random neighbor of $s$ ($r\in\mathcal{N}_s$), the following two cases are possible: 
\begin{enumerate}
    \item If  $r \notin \U_{t_n^+} \cup \S_{t_n}$ was chosen (or if $\mathcal{N}_s=\emptyset$), the set of active nodes remains unchanged. Moreover, the node $s$ is still available as a receiver and its neighbors are not blocked. That is to say, $\S_{t_n^{+}}\leftarrow \S_{t_n}\cup \{s\}$ and we proceed to the next node. 
    \item If  $r\in \U_{t_n^+} \cup \S_{t_n}$  was chosen (i.e.\ the neighbour is unexplored or available), we re-update the sets as before, and including $s$ in the active set: $\A_{t_n^{+}}\leftarrow \A_{t_n}\cup \{s,r\}$,  $\B_{t_n^+}\leftarrow \B_{t_n}\cup (\mathcal{N}_{s}\setminus \{r\})\cup (\mathcal{N}_{r}\setminus \{s\})$ and $\U_{t_n^+}\leftarrow \U_{t_n^+}\setminus \left(\mathcal{N}_{s}\cup\mathcal{N}_{r}\right)$. Moreover, the involved nodes that belonged to $\S_{t_n}$ should be removed from it: $\S_{t_n^+}\leftarrow \S_{t_n}\setminus \left(\mathcal{N}_s\cup\mathcal{N}_r\right)$. We then proceed to the next node.
\end{enumerate}

The measure-valued Markov chain approach is more involved in this scenario. To begin with, we need to define a three-dimensional measure $\mu_t(i,j,k)$ to keep track of the degree of a given unexplored node towards the unexplored, blocked and \emph{sans}-CTS nodes. Moreover, we also need the information about the degree of the \emph{sans}-CTS nodes towards the unexplored and \emph{sans}-CTS sets. This is necessary since, for instance, once a node is chosen as a receiver, it will block its unexplored neighbours. This receiver may belong either to $\U_t$ or $\S_t$. 

In brief, we need two measures: $\mu_t(i,j,k)$ and $\nu_t(i,j)$. The measure $\mu_t(i,j,k)$ represents at time $t$ the number of unexplored nodes with $i$ half-edges toward the unexplored set ($\sU\to\sU$ type), $j$ half-edges toward the blocked set ($\sU\to\sB$ type) and $k$ half-edges toward the \emph{sans}-CTS set ($\sU\to\sS$ type). Analogously, $\nu_t(i,k)$ represents at time $t$ the number of \emph{sans}-CTS nodes with $i$ half-edges toward the unexplored set ($\sS\to\sU$ type) and $k$ half-edges towards the \emph{sans}-CTS set ($\sS\to\sB$ type). At time $t=0$ these measures are:
\begin{gather*}
\mu_0(i,j,k)=
\begin{cases}
    h(i)& \mbox{if} \quad  j=k=0,\\
       0& \mbox{if} \quad  j,k>0,
\end{cases}
\quad \text{ and } \quad
\nu_0(i,k)=0. 
\end{gather*}
That is to say, initially all nodes are unexplored and point towards unexplored nodes. Note that we could analyze a situation where some nodes act only as receivers, by assigning them to the initial measure $\nu_0$ (and reflecting this on $\mu_0$). 

We then analyze a large graph limit when the number of nodes goes to infinity as we did for the previous cases. The procedure is essentially the same: to write down the evolution of $\mu_t(i,j,k)$ (respectively $\nu_t(i,j)$) we identify at each step of the algorithm the mean number of nodes of each type and degree that should be removed from the measure. The random variables we defined before (notably $Y$ and $X$) have to be adapted to this case, but their distribution is essentially the same. For the sake of clarity, and since our objective here is to illustrate possible extensions of our framework, we do not write the differential equations system. 

Once we have obtained a (numerical) solution of the corresponding equation we can estimate as before the spatial reuse as:
\begin{gather}\label{eq:sr_nobloqrts}
        \E\{\Theta\} = \theta = \lambda\int_0^\infty \bar u_t P_t(CTS) dt.
\end{gather}
where $\bar u_t= \sum\limits_{i,j,k}\bar\mu_t(i,j,k)$ and $P_t(CTS) = \sum\limits_{i+k>0,j} \frac{i+k}{i+j+k}  \bar\alpha_t(i,j,k)$. 

Again, and for illustrative purposes, we consider the example of the Poisson hard core process. The results, presented in Fig.\ \ref{fig:sr_hcmasruido_nobloqrts}, again show that when the initial nodes' degree distribution is enough to describe the statistics of the resulting communication graph (corresponding to the higher values of $\sigma$), then our method yields very precise results.  

 \begin{figure}
     \centering
     \includegraphics[scale=0.3]{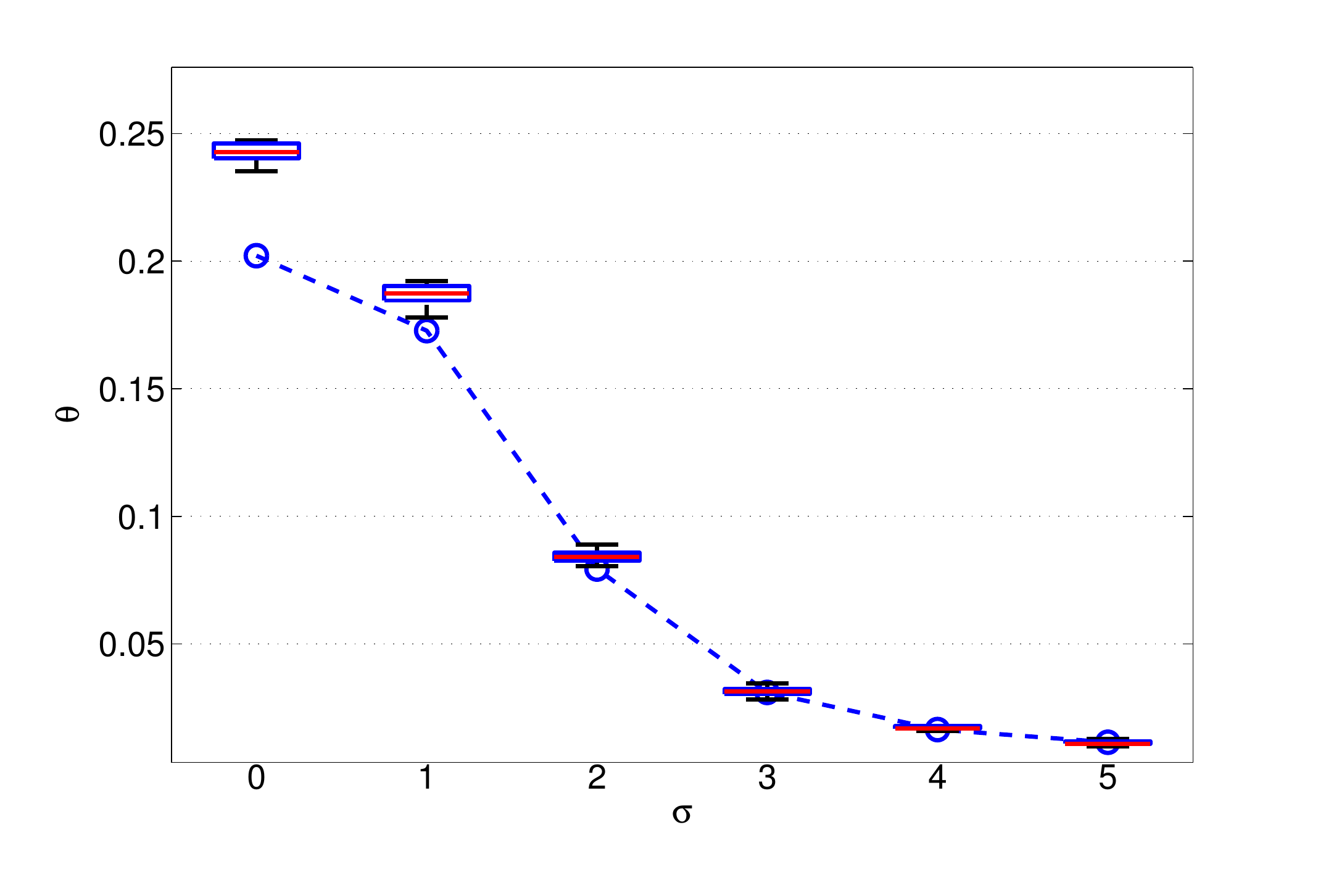}
     \caption{The evaluation of Eq.\ \eqref{eq:sr_nobloqrts} along with the boxplot of the numerical results of 10 time-slot simulations for a Poisson process with log-normal fading and a path-loss of the form $L(r)=r^{-2}$. The value of $\sigma$ corresponds to the standard deviation of the underlying normal distribution. }\label{fig:sr_hcmasruido_nobloqrts}
 \end{figure}

\section{Conclusions and future works}

We showed how stochastic dynamics on configuration models
might provide powerful performance tools for estimating the spatial
reuse of wireless networks. 
Error bounds and stochastic comparison results for spatial models would be of great practical
importance.

\textbf{References}

\bibliographystyle{elsarticle-num}
\bibliography{bibliography}
\end{document}